\documentclass[english]{llncs}
\usepackage[T1]{fontenc}
\usepackage[latin9]{inputenc}
\setcounter{secnumdepth}{3}
\setcounter{tocdepth}{3}
\usepackage{verbatim}
\usepackage{float}
\usepackage{amsmath}
\usepackage{amssymb}
\usepackage{esint}
\usepackage[normalem]{ulem}
\usepackage{graphicx}

\usepackage{geometry} 
\geometry{margin=3cm}


\usepackage{algorithmic}
\usepackage[linesnumbered, ruled, vlined, norelsize]{algorithm2e}


\usepackage{color}

\def\DEBUG{true}

\ifdefined\DEBUG
\definecolor{marekgreen}{RGB}{0,185,0}

\newcommand{\fab}[1]{\textcolor{red}{#1}}

  \def\rem#1{{\marginpar{\raggedright\scriptsize #1}}}
  \newcommand{\fabr}[1]{\rem{\textcolor{red}{$\bullet$ #1}}}
  \newcommand{\joyr}[1]{\rem{\textcolor{blue}{$\bullet$ #1}}}
  \newcommand{\marr}[1]{\rem{\textcolor{marekgreen}{$\bullet$ #1}}}
\else

  \newcommand{\fab}[1]{#1}

  \newcommand{\fabr}[1]{}
  \newcommand{\joyr}[1]{}
  \newcommand{\marr}[1]{}
\fi


\newcommand{\Neigh}[1]{\delta(#1)}
\newcommand{\GKPS}{\operatorname{GKPS}}

\newcommand{\hybrid}{\operatorname{Hybrid}}
\newcommand{\MAIN}{\operatorname{MAIN}}

\newcommand{\Gre}{\operatorname{Greedy}}



\usepackage{xspace}
\newcommand{\euler}{\mathbf{\mathsf{e}}}
\newcommand{\cI}{\mathcal{I}}

\def \OPT {\ensuremath{\operatorname{OPT}}\xspace}
\def \LP {\ensuremath{\operatorname{LP}}\xspace}
\def \ALG {\ensuremath{\operatorname{ALG}}\xspace}
\newcommand{\xhdr}[1]{\vspace{2mm} \noindent{\textbf{#1}}}
\newcommand{\ie}{{\emph{i.e.,}~\xspace}}
\newcommand{\eg}{{\emph{e.g.,}~\xspace}}
\newcommand{\etal}{{\emph{et al.}~\xspace}}
\usepackage{hyperref}

\usepackage{babel}

\makeatother

\usepackage{babel}
\begin{document}
\global\long\def\adj{\mbox{\footnotesize Adj}}
 \global\long\def\chr#1{\mathbf{1}_{#1}}
 \global\long\def\br#1{\left( #1 \right)}
 \global\long\def\brq#1{\left[ #1 \right]}
 \global\long\def\brw#1{\left\{  #1\right\}  }
 \global\long\def\cut#1{\partial#1 }
 \global\long\def\excond#1#2{\mathbb{E}\left[\left. #1 \right\vert #2 \right]}
 \global\long\def\ex#1{\mathbb{E}\left[#1\right]}
 \global\long\def\E{\mathbb{E}}
 \global\long\def\exls#1#2{\mathbb{E}_{#1}\left[#2\right]}
 \global\long\def\prcond#1#2{\Pr\left[\left. #1 \right\vert #2 \right]}
 \global\long\def\setst#1#2{\left\{  #1\left|#2\right.\right\}  }
 \global\long\def\setstcol#1#2{\left\{  #1:#2\right\}  }
 \global\long\def\set#1{\left\{  #1\right\}  }
 \global\long\def\adj#1{\delta\br{#1}}
 \global\long\def\setst#1#2{\left\{  \left.#1\right|#2\right\}  }
 \global\long\def\set#1{\left\{  #1\right\}  }
 \global\long\def\ind#1{\mathbf{1}\left[ #1 \right]}
 \global\long\def\st#1{[#1] }
 \global\long\def\opstyle#1{\mathbb{#1}}
 \global\long\def\size#1{\left|#1\right|}
 \global\long\def\setstcol#1#2{\left\{  #1:#2\right\}  }
 \global\long\def\set#1{\left\{  #1\right\}  }
 \global\long\def\indi#1{\chi\brq{#1}}
 \global\long\def\evalat#1#2{ #1 \Big|_{#2}}
 \global\long\def\prls#1#2{\opstyle P_{#1}\left[ #2 \right]}
 \global\long\def\excondls#1#2#3{\mathbb{E}_{#1}\brq{\left.#2\right|#3}}
 \global\long\def\prcondls#1#2#3{\Pr_{#1}\brq{\left.#2\right|#3}}
 \global\long\def\st#1{[#1] }
 \global\long\def\indi#1{\chi\brq{#1}}
 \global\long\def\evalat#1#2{ #1 \Big|_{#2}}
 \global\long\def\Df#1#2{\frac{\partial#1}{\partial#2}}
 \global\long\def\hx#1{\hat{x}_{#1}}
 \global\long\def\eps{\varepsilon}
 \global\long\def\M{{\cal M}}
 \global\long\def\pr#1{\Pr\brq{#1}}
 \global\long\def\opstyle#1{\mathbb{#1}}
 \global\long\def\ex#1{\mathbb{E}\left[#1\right]}
 \global\long\def\xp#1{\mathbb{E}\left[#1\right]}
 \global\long\def\prcond#1#2{\Pr \left[\left. #1 \right\vert #2 \right]}
 \global\long\def\excond#1#2{\opstyle E \left[#1 \left|#2\right. \right]}
 \global\long\def\exls#1#2{\opstyle{\opstyle E}_{#1}\left[ #2 \right]}
 \global\long\def\prls#1#2{\opstyle P_{#1}\left[ #2 \right]}
 \global\long\def\br#1{\left( #1 \right)}
 \global\long\def\brq#1{\left[ #1 \right]}
 \global\long\def\brw#1{\left\{  #1\right\}  }
 \global\long\def\size#1{\left|#1\right|}
 \global\long\def\setst#1#2{\left\{  #1\left|#2\right.\right\}  }
 \global\long\def\setstcol#1#2{\left\{  #1:#2\right\}  }
 \global\long\def\pxE{\br{p_{e}x_{e}}_{e\in E}}
 \global\long\def\evalat#1#2{ #1 \Big|_{#2}}
 \global\long\def\X#1{\hat{X}_{#1}}
 \global\long\def\E{\hat{E}}
 \global\long\def\Frac#1#2{#1\left/\br{#2}\right.}
 \global\long\def\P#1{{\cal P}\br{#1}}
 \global\long\def\I{{\cal I}}
 \global\long\def\brqbb#1{\llbracket#1\rrbracket}
 \global\long\def\h#1{\hat{#1}}
 \global\long\def\lrg#1{#1_{large}}
 \global\long\def\sml#1{#1_{small}}

\pagestyle{headings}

\author{ Marek Adamczyk\inst{1} \and Brian Brubach \inst{2} \and Fabrizio Grandoni\inst{3} \and Karthik A. Sankararaman \inst{4} \and Aravind Srinivasan \inst{5} \and Pan Xu \inst{6}}

\institute{Institute of Informatics, University of Warsaw, Poland, \email{m.adamczyk@mimuw.edu.pl} \and
Wellesley College, Wellesley, MA, \email{bb100@wellesley.edu} \and 
IDSIA, University of Lugano, Switzerland, \email{fabrizio@idsia.ch} \and 
Facebook, Menlo Park, CA, \email{karthikabinavs@gmail.com} \and 
University of Maryland, College Park, MD, \email{asriniv1@umd.edu} \and 
New Jersey Institute of Technology, New Jersey, NJ, \email{pxu@njit.edu}
 }

\title{Improved Approximation Algorithms\\ for Stochastic-Matching Problems\thanks{This work was partially done while Adamczyk was visiting IDSIA. Adamczyk and Grandoni were partially supported by the ERC StG project NEWNET no.~279352, the ERC StG project PAAl no.~259515, the ISJRP project Mathematical Programming in Parameterized Algorithms, and the SNSF Excellence Grant 200020B\_182865/1.
Adamczyk was partially supported by ERC StG project TOTAL no.~677651. 
Brubach, Sankararaman, Srinivasan and Xu were supported by NSF Awards CCF-1422569, CCF-1749864, and CNS 1010789, a gift from Google, Inc., and by research awards from Adobe, Inc.\ and Amazon, Inc. A preliminary version of this paper appeared in \cite{AGM15} and \cite{BSSX16}.
}} \maketitle

\begin{abstract}
We consider the \emph{Stochastic Matching}
problem, which is motivated by applications in kidney exchange and
online dating. In this problem, we are given an undirected graph. Each edge
is assigned a known, independent probability of existence and a positive weight (or profit). We must \emph{probe} an edge to discover whether or not it exists. Each node is assigned a positive integer called a \emph{timeout} (or a \emph{patience}). On this random graph we are executing a process, which probes the edges one-by-one and gradually constructs a matching. The process is constrained in two ways. First, if a probed edge exists, it must be added irrevocably to the matching (the \emph{query-commit} model). Second, the timeout of a node $v$ upper-bounds the number of edges incident to $v$ that can be probed. The goal is to maximize the expected weight of the constructed matching.

For this problem, Bansal et al.~\cite{BGLMNR12algo} provided a $0.33$-approximation
algorithm for bipartite graphs and a $0.25$-approximation for general
graphs. We improve the approximation factors to $0.39$ 
and $0.269$, respectively. 


The main technical ingredient in our result is a novel way of probing
edges according to a \emph{not-uniformly-random} permutation. Patching
this method with an algorithm that works best for large-probability
edges (plus additional ideas) leads to our improved approximation
factors. 
\end{abstract}

\section{Introduction}
Maximum-weight matching is a fundamental problem in combinatorial optimization and has applications in a wide-range of areas such as market design \cite{roth2004kidney,abdulkadirouglu2005new}, computer vision \cite{caetano2009learning,van2011survey}, computational biology \cite{tian2006saga}, and machine learning \cite{sun2012efficient}. The basic version of this problem is well-understood; exact polynomial-time algorithms are available for both bipartite and general graphs due to the celebrated results of \cite{kuhn1955hungarian} and \cite{edmonds1965paths}, respectively. However, in many applications there are uncertainties associated with the input and typically, the problem of interest is more nuanced. A common approach to model this uncertainty is via randomness; we assume that we have a distribution over a collection of graphs. There are many such models ranging from stochastic edges \cite{costello2012stochastic,BDHPSS15} to stochastic vertices \cite{FMMM09,BSSX16}. 

In this paper, we study the stochastic-matching model where edges in the graph are uncertain and the exact realization of the graph is obtained by \emph{probing} the edges. 
This model has many applications in kidney exchange, online dating, and labor markets (see Subsection~\ref{subsec:applications} for details). Further, we study a more general version of the problem, introduced by \cite{CIKMR09}, where the algorithm is constrained by the number of probes it can make on the edges incident to any single vertex. This is used to model the notion of \emph{timeouts} (also called \emph{patience}) which naturally arises in many applications.

	The formulation (described formally in subsection~\ref{subsec:probSt}) is similar to many other well-studied stochastic-optimization problems such as stochastic knapsack \cite{DGV08}, stochastic packing \cite{BGLMNR12algo,BSSX18},  and  stochastic shortest-path problems \cite{nikolova2006stochastic}.


\subsection{Definitions and Notation}
	\label{subsec:probSt}
	In the stochastic-matching problem, we are given a graph $G=(V, E)$, where $V$ denotes the set of vertices and $E$ denotes the set of potential edges. Additionally, we are given the following functions.
\begin{itemize}
	\item $p: E \rightarrow [0, 1]$ associates every edge $e$ with an independent probability of existence, $p_e$. When an edge is probed, it will exist with probability $p_e$ and must be added to the matching if it exists. Thus, we can only probe edges whose endpoints are currently unmatched.
	\item $w : E \rightarrow \mathbb{R}^+$ denotes a weight function that assigns a non-negative weight (or profit) $w_e$ to each edge $e$.
	\item $t: V \rightarrow \mathbb{N}$ is the \emph{timeout} (or \emph{patience}) function which sets an upper bound $t_v$ on the number of times a vertex $v$ can have one of its incident edges probed.
\end{itemize}

An algorithm for this problem probes edges in a possibly adaptive order. When an edge is probed, it is \emph{present} with probability $p_{e}$ (independent of all other edges), in which case it must be included in the matching under construction (\emph{query-commit} model) and provides a weight of $w_{e}$. We can probe at most $t_{v}$ edges among the set $\delta(v)$ of edges incident to a node $v$. Furthermore, when an edge $e$ is added to the matching, no edge $f\in\delta(e)$ (\ie incident on $e$) can be probed in subsequent steps. Each edge may only be probed once.  Our goal is to maximize the expected weight of the constructed matching.

	A naive approach to solve this problem is to construct an \emph{exponential-sized} Markov Decision Process (MDP) and solve it optimally using dynamic programming. However, there are no known algorithms to solve this exactly in \emph{polynomial time}. In fact, the exact complexity of computing the optimal solution is unknown. The naive solution above is in $\operatorname{PSPACE}$; it is unknown if the problem is in either $\operatorname{NP}$ or in $\operatorname{P}$.  Thus, following prior works \cite{CIKMR09,BGLMNR12algo,AGM15}, we aim at finding an \emph{approximation} to the optimal solution in polynomial time. To measure the performance of any algorithm, we use the standard notion of approximation ratio which is defined as follows.

	\begin{definition}[Approximation ratio]
	\label{def:approxRatio}
		For any instance $\cI$, let $\mathbb{E}[\ALG(\cI)]$ denote the expected weight of the matching obtained by the algorithm $\ALG$ on $\cI$. Let $\mathbb{E}[\OPT(\cI)]$ denote the expected weight of the matching obtained by the optimal probing strategy. Then the approximation ratio of $\ALG$ is defined as $\min_{\cI} \frac{\mathbb{E}[\ALG(\cI)]}{\mathbb{E}[\OPT(\cI)]}$.
	\end{definition}

We remark that the measure used in \cite{AGM15} is the reciprocal of the ratio defined in Definition~\ref{def:approxRatio}. Thus, the ratios in \cite{AGM15} are always greater than $1$ while the ratios in this paper are at most $1$. Bansal et al.~\cite{BGLMNR12algo} provide an LP-based $0.33$-approximation when $G$ is bipartite, and via a reduction to the bipartite case, a $0.25$-approximation for general graphs (see also \cite{DBLP:conf/stacs/AdamczykSW14}).
	
	For the related stochastic-matching problem \emph{without} patience constraints (or equivalently, all $t_v$ equal infinity), the best-known algorithm achieves an approximation ratio of $1-1/\euler$ \cite{GKS19} and no algorithm can perform better than $0.898$ \cite{costello2012stochastic}. Since the problem without patience constraints is a special case of the problem studied here,  the latter hardness result applies to our setting as well.
	
	
	Chen~\etal \cite{CIKMR09} formulated and initiated the study of the problem with patience constraints and gave a probing scheme that achieves an approximation ratio of $0.25$ in \emph{unweighted} bipartite graphs. Later, Adamczyk~\cite{AdamczykuwGreedy} showed that the simple greedy algorithm (probe edges in non-increasing order of probability) achieves a 0.5-approximation for the unweighted case. This model was extended to \emph{weighted} bipartite graphs by \cite{BGLMNR12algo} who improved the ratio to $0.33$. \cite{AGM15} provided a new algorithm that further improved this ratio to $0.35$ which is the current state-of-the-art. For general graphs, the current best approximation is $0.31$~\cite{BCNSX15}.
	
	We note that our work and the most-recent prior work \cite{BGLMNR12algo,AGM15} use a natural linear program (LP) -- see (LP-BIP) in Section~\ref{sec:offline_bip} -- to upper bound the optimal solution. However, it was shown in~\cite{BGS19} that no algorithm can achieve an approximation better than $0.544$ using this LP even for the unweighted problem.


\xhdr{Online Stochastic Matching with Timeouts}. We also consider the \emph{Online Stochastic Matching with Timeouts}
problem introduced in \cite{BGLMNR12algo}. Here we are given as input
a complete bipartite graph $G=(A\cup B,A\times B)$, where nodes in $B$ are
\emph{buyer types} and nodes in $A$ are \emph{items} that we wish
to sell. Like in the offline case, edges are labeled with probabilities
and profits, and nodes are assigned timeouts. However, in this case
timeouts on the item side are assumed to be unbounded. Then a second
bipartite graph is constructed in an online fashion. Initially this
graph consists of $A$ only. At each time step one random buyer $\tilde{b}$
of some type $b$ is sampled (possibly with repetition) from a given
probability distribution. The edges between $\tilde{b}$ and $A$
are copies of the corresponding edges in $G$. The online algorithm
has to choose at most $t_{b}$ unmatched neighbors of $\tilde{b}$,
and probe those edges in some order until some edge $a\tilde{b}$
turns out to be present (in which case $a\tilde{b}$ is added to the
matching and we gain the corresponding profit), or when all the mentioned
edges have been probed. This process is repeated $n$ times (with a new buyer being sampled at each iteration), and our goal
is to maximize the final total expected profit\footnote{As in \cite{BGLMNR12algo}, we assume that the probability of each buyer-type $b$ is an integer multiple of $1/n$.}.

For this problem, Bansal et al.~\cite{BGLMNR12algo} present a $0.126$-approximation
algorithm. In his Ph.D.~thesis, Li~\cite{li2011decision} claims an
improved $0.249$-approximation. 
However, his analysis contains a mistake \cite{Li14private}. By fixing that, he still achieves a $0.193$-approximation ratio improving over~\cite{BGLMNR12algo}. And although a corrected version of this result is not published later anywhere, in essence it would have follow the same lines as the result of Mukherjee~\cite{M19} who independently obtained exactly the same approximation ratio.


\subsection{Our Results}

Our main result is an approximation algorithm for bipartite Stochastic
Matching which improves the $0.33$-approximation of Bansal et al.~\cite{BGLMNR12algo}
(see Section \ref{sec:offline_bip}).

\begin{theorem} \label{thr:newalg}
There is an expected $0.39$-approximation algorithm for Stochastic
Matching in bipartite graphs. 
\end{theorem}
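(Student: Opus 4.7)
My plan is to follow the ``patching'' strategy hinted at in the introduction, in which two algorithms with complementary strengths on different edge probabilities are combined. Let me fix a threshold $\eps \in (0,1)$ to be optimized at the end. Given an optimal fractional solution $(x_e)_{e \in E}$ of the LP relaxation (LP-BIP), partition the edges into large-probability edges $\Elarge = \{e : p_e \geq \eps\}$ and small-probability edges $E_S = E \setminus \Elarge$. The motivation is that existing LP-rounding schemes lose a different multiplicative factor on these two classes, and neither one dominates.

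The first step is to design a probing algorithm $\ALG_1$ that works well on $E_S$. Here I would probe edges in the order of a \emph{non-uniform} random permutation: each edge $e$ is assigned an independent priority drawn from a density crafted so that the marginal ``this is the $k$-th edge seen at $v$'' probabilities are aligned with $x_e$. Because $p_e$ is small for $e \in E_S$, the events ``endpoint $u$ already matched'' and ``endpoint $u$ has already used $t_u$ probes'' are both rare when $e$'s turn arrives, and can be controlled by a union bound / correlation-decay argument. I expect the analysis to give $\E[\ALG_1] \geq f_S(\eps) \sum_{e \in E_S} w_e x_e + f_L(\eps) \sum_{e \in \Elarge} w_e x_e$ for explicit $f_S(\eps) > f_L(\eps)$, where the weaker bound on the large edges reflects the stronger blocking they induce.

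The second step is to design a complementary algorithm $\ALG_2$ tuned to $\Elarge$. A natural candidate is a variant of the greedy/LP-based algorithm of \cite{BGLMNR12algo,AGM15}; on a large-probability edge each endpoint is essentially ``used up'' by its first present incident probe, so the expected profit can be charged per-endpoint and the loss depends on a $1-(1-p_e)^{t_v}$-type factor which is close to $1$ when $p_e$ is not too small. This should yield $\E[\ALG_2] \geq g_L(\eps) \sum_{e \in \Elarge} w_e x_e$ with $g_L(\eps) > f_L(\eps)$. Returning the better of the two outcomes (equivalently, a randomized convex combination) gives expected profit at least, for any $\lambda \in [0,1]$,
\[
\lambda \Bigl( f_S(\eps) \sum_{e \in E_S} w_e x_e + f_L(\eps) \sum_{e \in \Elarge} w_e x_e \Bigr) + (1-\lambda)\, g_L(\eps) \sum_{e \in \Elarge} w_e x_e.
\]
Since $\sum_e w_e x_e \geq \E[\OPT]$, optimizing $(\eps,\lambda)$ to balance contributions across the two edge classes should yield the claimed $0.39$ ratio.

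The main obstacle, as I see it, is the analysis of $\ALG_1$: establishing a clean per-edge lower bound on the probability that $e=uv$ is not blocked under the non-uniform permutation. The two endpoints of $e$ induce correlated history events, since both its probing history and its matching history depend on the same permutation, so FKG-type or direct moment arguments will be needed in place of a naive product bound. Once this technical lemma is in place, the greedy analysis for $\Elarge$ and the one-dimensional tuning of $\eps$ and $\lambda$ are routine, and imply Theorem~\ref{thr:newalg}.
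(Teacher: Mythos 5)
Your plan reconstructs the $0.351$-approximation of Section~\ref{sub:Bipartite-graphs} (threshold $\delta$, non-uniform-permutation algorithm $\ALG_1$, greedy patching on large edges), but that route tops out at roughly $0.351$ and does not reach $0.39$. Concretely, your $\ALG_1$ is analyzed per edge as $\sum_e w_e p_e x_e \cdot g(p_e)$ with $g(1)=1/3$, and your $\ALG_2$ (max-weight matching w.r.t.\ $w_e p_e$) guarantees only $\delta \cdot \LP_{\mathrm{large}}$, so balancing at the worst $\gamma$ gives $\frac{\delta\, g(\delta)}{\delta + g(\delta) - 1/3}$, which is maximized around $\delta \approx 0.6$ at $\approx 0.3516$. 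The greedy patch is simply too lossy on small-probability mass, and there is no way to squeeze $0.39$ out of this balance.

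The missing idea for $0.39$ is a \emph{different} patching algorithm that is not a threshold/greedy hybrid at all. Define $y_e := p_e x_e$; this is feasible for (LP-MATCH), so one can run GKPS directly on $\vec y$ to obtain a (random) integral matching $\{e : Y_e = 1\}$ and probe exactly those edges. By the marginal and degree-preservation properties each edge is probed with probability $p_e x_e$ and is never blocked, so the expected profit of this $\ALG_2$ is exactly $\sum_e w_e p_e^2 x_e$. This gives a per-edge factor $p_e$ on every edge (rather than a hard $\delta$ cut-off only on large edges), which pairs naturally with the per-edge factor $g(p_e)$ of $\ALG_1$. The main algorithm then runs whichever of $\ALG_1$, $\ALG_2$ has the larger LP-computable lower bound, no threshold $\eps$ needed. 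The adversary's problem becomes choosing a probability density $f$ on $[0,1]$ to minimize $\max\{\int g(q) f(q)\,dq,\ \int q f(q)\,dq\}$ with $\int f = 1$; concavity of $g$ reduces this to a one-dimensional problem whose value is $\frac{g(0)}{1 + g(0) - g(1)} = \frac{3e^2-3}{7e^2-3} \approx 0.393$.

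One smaller but real issue: you write ``returning the better of the two outcomes'' and call it equivalent to a randomized convex combination. Under probe-commit you cannot run both algorithms and keep the better realization; you can only \emph{choose offline} which one to run, based on quantities computable from the LP solution. The randomized-combination reading of the lower bounds is fine, but the phrasing ``better of the two outcomes'' would be an illegal algorithm. Also, the LP objective and hence the benchmark is $\sum_e w_e p_e x_e$, not $\sum_e w_e x_e$; the extra $p_e$ matters in the balance.
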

A $0.351$-approximation, originally presented in~ \cite{AGM15}, can be obtained as follows. We build upon the algorithm in \cite{BGLMNR12algo},
which works as follows. After solving a proper LP and rounding the
solution via a rounding technique from~\cite{GKPS06}, Bansal et
al.~probe edges in uniform random order. Then they show that every
edge $e$ is probed with probability at least $x_e\cdot g(p_{max})$, where $x_{e}$ is the fractional value of $e$ assigned by the LP, 
$p_{max}:=\max_{f\in\delta(e)}\{p_{f}\}$
is the largest probability of any edge incident to $e$ ($e$ excluded), and $g(\cdot)$ is a
decreasing function with $g(1)=1/3$.

Our idea is to instead consider edges in a carefully chosen \emph{non-uniform}
random order. This way, we are able to show (with a slightly simpler analysis) that each edge $e$ is probed with probability $x_{e}\cdot g\br{p_{e}}\geq \frac{1}{3}x_e$. Observe that we have the same function $g(\cdot)$ as in \cite{BGLMNR12algo}, but depending on $p_e$ rather than $p_{max}$. In particular, according to our analysis, small-probability edges are more likely to be probed than large-probability ones (for a given value of $x_{e}$), regardless of the
probabilities of edges incident to $e$. Though this approach alone
does not directly imply an improved approximation factor, we further patch it with a greedy algorithm that behaves best for large-probability edges, and this yields an improved approximation
ratio altogether. The greedy algorithm prioritizes large-probability edges for the same value of $x_e$.

We further improve the approximation factor for the bipartite case with the above mentioned updated patching algorithm, hence improving the ratio of $0.351$ from~\cite{AGM15} up to the current-best-known ratio of $0.39$ stated in Theorem~\ref{thr:newalg}.

We also improve on the $0.25$-approximation for general graphs in~\cite{BGLMNR12algo} (see Section \ref{sec:arbitrarygraphs}).
\begin{theorem} \label{thr:mainOfflineGeneral} 
There is an expected
$0.269$-approximation algorithm for Stochastic Matching in general
graphs. 
\end{theorem}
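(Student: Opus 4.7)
My plan is to reduce Stochastic Matching on general graphs to the bipartite case and invoke Theorem~\ref{thr:newalg}, while carefully controlling the loss incurred by the reduction and \emph{patching} the result with a complementary procedure tailored to large-probability edges (mirroring the strategy that yielded Theorem~\ref{thr:newalg} in the bipartite case). The starting point is an LP relaxation for general graphs analogous to (LP-BIP) whose optimum $\LPOPT$ upper bounds $\ex{\OPT(\cI)}$; if the bipartite LP alone does not suffice, I would augment it with odd-set (blossom-type) constraints to rule out fractional odd cycles and to keep $\LPOPT$ tight.

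The first step is a randomized bipartition of the vertex set: color each vertex independently with one of two colors and discard the monochromatic edges. Each edge survives with probability $1/2$, and the restriction of $\{x_e\}$ to the surviving bichromatic subgraph is feasible for the bipartite LP with expected weight $\LPOPT/2$. Running the bipartite algorithm of Theorem~\ref{thr:newalg} on this random bipartite instance then yields, in expectation, at least $0.39 \cdot \LPOPT/2 \approx 0.195 \cdot \LPOPT$, which by itself is not enough. The second step is a greedy-type algorithm that probes directly on the general graph (in non-increasing order of $p_e$, in the spirit of~\cite{AdamczykuwGreedy}) and performs best precisely on large-probability edges, where the $1/2$-loss of the bipartition is most harmful. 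The third step is to output whichever of the two algorithms is better in expectation, or equivalently to analyse a convex combination of their guarantees and to tune the parameters so that the worst case against $\LPOPT$ evaluates to $0.269$.

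For the analysis, the LP-based component gives a per-edge probe probability of the form $\Pr[e \text{ is probed}] \ge \tfrac{1}{2}\, x_e \cdot g(p_e)$, where the factor $1/2$ comes from the bipartition and $g(\cdot)$ is the non-uniform permutation bound produced by the proof of Theorem~\ref{thr:newalg}. The greedy component, on the other hand, contributes profit proportional to a term that grows with $p_e$, so the two bounds are complementary in $p_e$. Taking the better of the two over all instances, and optimizing the convex combination parameter against the LP upper bound $\sum_e w_e x_e$, should give the desired $0.269$-factor.

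The main obstacle I expect is the balancing/patching step: in the bipartite case both ingredients are analysed against the same LP, whereas here the LP-based algorithm operates on a random bipartite sub-instance while the greedy one operates on the full graph, so one has to compare both yields against a common upper bound. Making this comparison tight enough to beat $0.25$ is what drives the choice of the particular constant $0.269$ and is likely to require either strengthening the LP with odd-set constraints or a careful case split on the distribution of $p_e$ values to ensure that in every regime at least one of the two algorithms attains the claimed fraction of $\LPOPT$.
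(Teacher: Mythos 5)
Your high-level plan---random bipartition, run the bipartite algorithm on the surviving subgraph, strengthen the LP with blossom inequalities, and patch with a greedy algorithm---matches the paper's structure, but there is a genuine technical gap that makes your version fall well short of $0.269$. You bound the probe probability by $\tfrac{1}{2}x_e\cdot g(p_e)$, where the factor $\tfrac12$ is the survival probability under the bipartition and $g$ is the function from the bipartite safe-edge lemma (Lemma~\ref{lem:safe-bip}), satisfying $g(1)=1/3$. That gives only $\tfrac16 x_e$ before patching, and even plugging in the $0.39$-factor of Theorem~\ref{thr:newalg} as a black box yields $\approx 0.195$, as you note. Neither can be patched up to $0.269$. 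The missing observation is that the bipartition does \emph{double duty}: besides giving $\Pr[e\in\hat E_{AB}]=\tfrac12 x_e$, it also halves the conditional neighborhood load, $\E\bigl[\sum_{f\in\hat\delta(e)}p_f \mid e\in\hat E_{AB}\bigr]\le \sum_{f\in\delta(e)}\tfrac{p_f x_f}{2}\le 1$, instead of the bound $2$ used in the bipartite analysis. Repeating the proof of Lemma~\ref{lem:safe-bip} with this improved load bound replaces $g(p_e)$ by $h(p_e)=\frac{1}{1+p}\bigl(1-\exp\bigl(-(1+p)\tfrac1p\ln\tfrac{1}{1-p}\bigr)\bigr)$, with $h(1)=1/2$, so the LP-based algorithm already recovers $\tfrac12\cdot\tfrac12 x_e=\tfrac14 x_e$ before patching. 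This is the lever that makes the subsequent optimization over $\delta$ reach $0.269$.

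Two smaller corrections. First, the greedy used in the paper is not the non-increasing-$p_e$ greedy of~\cite{AdamczykuwGreedy}; it is the same max-weight-matching greedy (w.r.t.\ weights $w_ep_e$) as in the bipartite section, and this is exactly where the blossom inequalities earn their keep: they guarantee that the matching LP on the general graph is integral, so the greedy attains $\delta\cdot\LP_{\mathrm{large}}$. You correctly flagged that odd-set constraints might be needed, but for a different reason. Second, the paper does not invoke the full Theorem~\ref{thr:newalg} (the $\MAIN$ algorithm with the $\ALG_2$ patching); it uses the simpler $\ALG_1$-plus-greedy patching scheme of Section~\ref{sub:Bipartite-graphs}, adapted to general graphs, and balances $\frac{\gamma}{4}+\frac{1-\gamma}{2}h(\delta)$ against $\gamma\delta$, optimizing at $\delta\approx 0.558$.
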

This is achieved by reducing the general case to the bipartite one
as in prior work, but we also use a refined LP with blossom inequalities
in order to fully exploit our large/small probability patching technique.

Similar arguments can also be successfully applied to the online case.
\begin{theorem} \label{thr:mainOnline} There is an expected $0.245$-approximation
algorithm for Online Stochastic Matching with Timeouts.
\end{theorem}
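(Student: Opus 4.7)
The plan is to port the two-pronged strategy behind Theorem~\ref{thr:newalg} --- a non-uniform-permutation base algorithm combined with a greedy algorithm tuned for large-probability edges --- from the offline bipartite case to the online setting, using a standard LP relaxation as the benchmark for $\E[\OPT]$.

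First I would write down an LP in which, for each edge $e=ab$ with $b \in B$, a variable $x_e$ represents the (fractional) expected number of times $e$ is probed across the $n$ online rounds. The constraints are: (i) per buyer-type patience, $\sum_{e\in\delta(b)} x_e \le t_b\, q_b\, n$, where $q_b$ is the arrival probability of type $b$; (ii) per item, $\sum_{e\in\delta(a)} p_e x_e \le 1$; (iii) per-arrival feasibility, $\sum_{e\in\delta(b),\, e\ni a} x_e \le q_b n$; and (iv) $x_e\ge 0$. The objective $\sum_e w_e p_e x_e$ upper-bounds $\E[\OPT]$ by considering the ``ideal'' fractional probing strategy.

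The base algorithm then proceeds as in the offline analysis of Theorem~\ref{thr:newalg}: when a buyer $\tilde b$ of type $b$ arrives, I apply GKPS-style rounding~\cite{GKPS06} to the conditional vector $(x_e/(q_b n))_{e\in\delta(b)}$ to select at most $t_b$ candidate edges, and probe them in the non-uniform random order from Section~\ref{sec:offline_bip} so that each edge $e\in\delta(b)$ is probed with probability at least $\tfrac{x_e}{q_b n}\,g(p_e)$ conditional on its item endpoint $a$ being free. Next, by a standard inductive argument over the $n$ arrival rounds --- controlling the per-round matching probability of $a$ via constraint (ii) --- I bound $\Pr[a \text{ unmatched at step }s]\ge \beta$ for some explicit constant $\beta$. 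Multiplying and summing over rounds yields that the base algorithm collects profit at least $\beta\, g(p_e)\, p_e w_e x_e$ from each edge $e$.

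To complement the base algorithm (whose per-edge guarantee $g(p_e)$ is weakest for large $p_e$), I would design a greedy online variant that, on each arrival of $b$, probes edges of $\delta(b)$ in decreasing order of $p_e$ (restricted to the LP support and to currently unmatched items) up to the patience budget $t_b$. The analysis of this greedy algorithm on the same LP benchmark gives a per-edge contribution that improves with $p_e$, exactly mirroring the large/small patching used in the offline bipartite case. Taking a convex combination of the base and greedy algorithms with an appropriately chosen mixing parameter $\lambda$ and a threshold $p^\star$ separating ``small'' and ``large'' edges then yields a worst-case per-edge contribution of $0.245\, p_e w_e x_e$, which by LP duality gives the claimed approximation.

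The main obstacle is Step~3, the item-availability bound: in the online setting an item $a$ may be touched across all $n$ arrivals, and to obtain a \emph{tight} constant $\beta$ one must interleave the inductive step-by-step argument with the conditional guarantees produced by both GKPS rounding and the non-uniform permutation. A secondary difficulty is calibrating $\lambda$ and $p^\star$ in the patching: the offline identities from Theorem~\ref{thr:newalg} must be re-derived with the extra factor $\beta<1$ from item unavailability, and the arithmetic has to close at the target constant $0.245$ rather than $0.39$. Everything else --- the LP setup, GKPS rounding, and the non-uniform permutation machinery --- can be invoked essentially as black boxes from the offline analysis.
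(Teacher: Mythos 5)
Your proposal correctly identifies the overall shape of the argument (LP benchmark, GKPS rounding per arrival, non-uniform permutation, patch with a greedy alternative), but it is missing the single most important ingredient of the paper's online analysis: the \emph{dumping factors}. You observe correctly that Step~3 --- proving a good constant lower bound on $\Pr[a\text{ unmatched when }b\text{ arrives}]$ --- is ``the main obstacle,'' but the paper does not close this gap with a ``standard inductive argument.'' Instead, it introduces per-edge dumping factors $\alpha_{ab}=\tfrac{1}{2\beta_{ab}}$, where $\beta_{ab}$ is the probability that $ab$ is not blocked within $b$'s own subroutine. The point of the dumping factor is to force the probability that $b$ probes $a$ (given $b$ arrives and $a$ is free) to be \emph{exactly} $\tfrac{1}{2}x_{ab}$ rather than merely at least $g(p_{ab})x_{ab}$. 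This exactness is what makes the item-side analysis tractable: the paper then couples the arrival order of first-occurrence buyer types with i.i.d.\ exponential random variables, and because each earlier buyer $b'$ blocks $a$ with probability precisely $\tfrac{1}{2}x_{ab'}p_{ab'}$, the resulting integral evaluates cleanly to $\tfrac{1}{1+\frac{1}{2}(1-1/e)}=\tfrac{2e}{3e-1}$. Without this self-throttling, one is stuck with a weaker bound (the paper notes that the non-uniform permutation alone gives only $0.193$, matching the corrected Li analysis). So the heart of the proof is absent from your sketch, not merely deferred.

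Two further discrepancies. First, in the probe-commit model you cannot literally run a ``convex combination with mixing parameter $\lambda$'' of two probing strategies on the same instance; the paper instead computes $\gamma$ (the LP mass on large edges) offline and deterministically selects one of the two algorithms, and --- importantly for the online case --- the large/small patching is realized by \emph{modifying the dumping factors themselves} ($\alpha_{ab}=h(\delta)/\beta_{ab}$ for small edges, $\tfrac{1}{2\beta_{ab}}$ for large ones), which again only makes sense once dumping factors are in play. Second, the paper's greedy alternative is not ``probe $\delta(b)$ in decreasing order of $p_e$ up to $t_b$''; it computes a single maximum-weight matching (w.r.t.\ $w_ep_e$) from LP-ONL offline and probes at most one edge per arriving type, which is what makes its approximation ratio $(1-1/e)\gamma\delta$ provable via the matching-polytope argument from Lemma~\ref{lem:greedy}. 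Your online-greedy variant would need its own analysis, which is not obviously available. Finally, the LP you write is an unnormalized variant of the paper's LP-ONL (the paper assumes $q_b=1/n$, so your $q_bn$ factors collapse to $1$); this is cosmetic, but constraint (iii) as you wrote it reduces to $x_{ab}\le 1$ and is not what you likely intended.
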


By applying our idea of non-uniform permutation of edges we would get
a $0.193$-approximation (the same as in~\cite{li2011decision}, after
correcting the mentioned mistake). However, due to the way edges have
to be probed in the online case, we are able to finely control the
probability that an edge is probed via \emph{dumping factors}. This
allows us to improve the approximation from $0.193$ to $0.24$. Our
idea is similar in spirit to the one used by Ma~\cite{DBLP:conf/soda/Ma14}
in his elegant $2$-approximation algorithm for correlated non-preemptive
stochastic knapsack. Further application of the large/small probability
trick gives an extra improvement up to $0.245$ (see Section \ref{sec:online}). We remark that since the publication of the conference version of this work, this has been improved to $0.46$ (\cite{BSSX17}) and more recently to $0.51$ (\cite{fata2019multi}).

\subsection{Applications}
	\label{subsec:applications}
	As previously mentioned, the stochastic matching problem is motivated from various applications such as kidney exchange and online dating. In this sub-section, we briefly consider these applications and show how we can use stochastic matching as a tool to solve these problems.
	
	\xhdr{Kidney exchange in the United States.} Kidney transplantation usually occurs from deceased donors; however, unlike other organs, another possibility is to obtain a kidney from a \emph{compatible} living donor since people need only one kidney to survive. There is a large waiting list of patients within the US who need a kidney for their survival. As of July 2019, the United Network for Organ Sharing (UNOS) estimates that the current number of patients who need a transplant is 113,265 with only 7,743 donors in the pool \footnote{Data obtained from \url{https://unos.org/data/transplant-trends/}}. One possibility is to enter the waitlist as a pair, where one person needs a kidney while the other person is willing to donate a kidney. Viewed as a stochastic-matching problem, the vertices are donor-patient pairs. An edge between two vertices $(u, v)$ exists, if the donor in $u$ is compatible with the patient in $v$ and the donor in $v$ is compatible with the patient in $u$. The probability on the edge is the probability that the exchange will take place. Before every transplant takes place, elaborate medical tests are usually performed which is very expensive. More specifically, as described in \cite{CIKMR09}, \emph{a test called the \emph{crossmatching} is performed, that combines the recipient's blood serum with some of donor's red blood cells and checks if the antibodies in the serum kill the cells.} This test is both expensive and time-consuming. Moreover, each exchange requires the transplant to happen \emph{simultaneously} since organ donation within the United States is \emph{at will}; donors are legally allowed to withdraw at any time including after agreeing for a donation. These constraints impose that for each donor-patient pair, the number of exchange initiations that can happen has to be small, which are modeled by the patience values at each vertex. Given the long wait-lists and the number of lives that depend on these exchanges, even small improvements to the accuracy of the algorithm have \emph{drastic} effects on the well-being of the population. Prior works (\eg \cite{BDHPSS15} and references therein) have empirically applied the variants of the stochastic-matching problem to real-world datasets; in fact, the current model that runs the US-wide kidney exchange is based on a stochastic-matching algorithm (\eg \cite{dickerson2017multi} and references therein).
	
	\xhdr{Online dating.} Online dating is quickly becoming the most popular form for couples to meet each other \cite{Dating19}. Platforms such as Tinder, eHarmony and Coffee meets Bagel generated about 1.7 billion USD in revenue in the year 2019\footnote{\url{https://www.statista.com/outlook/372/100/online-dating/worldwide}}. Suppose we have the case of a pool of heterosexual people, represented as the two vertex sets $U$ and $V$ in the bipartite graph. For every pair $u \in U$ and $v \in V$, the system learns their compatibility based on the questions they answer. The goal of the online-dating platform is to suggest couples that maximizes the social welfare (\ie the total number of matched couples). Each individual in a platform has a limited patience and thus, the system wants to ensure that the number of suggestions provided is small and limited. The stochastic-matching problem models this application where the probability on the edges represent the compatibility and the time-out function represents the individual patience. 
	
	\xhdr{Online labor markets.} In online labor markets such as Mechanical Turk, the goal is to match workers to tasks \cite{crowdsourcing14}. Each worker-task pair has a probability of completing the task based on the worker's skills and the complexity of the task. The goal of the platform is to match the pool of tasks to workers such that the (weighted) number of completed tasks is maximized. To keep workers in continued participation, the system needs to ensure that the worker is not matched with many tasks that they are incapable of handling. This once again fits in the model of the stochastic-matching problem where the workers and tasks represent the two sides of the bipartite graph, the edge-probability represents the probability that the task will be completed, and the time-out function represents the patience level of each worker.


\subsection{Other Related Work} Stochastic-matching problems come in many flavors and there is a long line of research for each of these models. The literature on the broader stochastic combinatorial optimization is (even more) vast (see \cite{swamy2006approximation} for a survey) and here we only mention some representative works.
	
	When the graph is unweighted and the time-out at every vertex is infinite, the classic $\operatorname{RANKING}$ algorithm of \cite{KVV90} gives an approximation ratio of $1-1/\euler$ for bipartite graphs; this in fact works even if the graph is unknown \emph{a priori}. For general graphs, the work of  \cite{costello2012stochastic} gives an algorithm that achieves a ratio of $0.573$; moreover, it shows that no algorithm can get a ratio better than $0.898$ for general graphs. The work of  \cite{molinaro2011query} gives an optimal algorithm in the special case of sparse graphs in this model. The paper \cite{GKS19} considers the \emph{weighted} version of this problem in bipartite graphs and designs algorithms that achieves an approximation ratio of $1-\frac{1}{\euler}$. (Recall that the timeouts are infinite in all of these works.) 
	
	The other line of research deals with the stochastic-matching problem where instead of a time-out constraint, the algorithm has to minimize the \emph{total} number of queried edges. The work of \cite{BDHPSS15} first proposed this model which was later considered and improved (by reducing the number of required queries) in many subsequent follow-up works including \cite{assadi2016stochastic,assadi2017stochastic,behnezhad2018almost,yamaguchi2018stochastic}.
	
	Online variants of the stochastic matching problems have been extensively studied due to their applications in Internet advertising and other Internet based markets. The paper \cite{FMMM09} introduced the problem of Online Matching with Stochastic Inputs. In this model, the vertices are drawn repeatedly and i.i.d.\ from a known distribution. The algorithm needs to find a match to a vertex each time one is presented,  \emph{immediately} and \emph{irrevocably}. The goal is to maximize the expected weight of the matching compared to an algorithm that knows the sequence of realizations \emph{a priori}. The work of \cite{FMMM09} gave an algorithm that achieves a ratio of $0.67$, which was subsequently improved by \cite{manshadi2012online,jaillet2013online,BSSX16}. Later work extended further to fully-online models of matching where both partitions of the vertex set in the bipartite graph are sampled i.i.d.\ from a known distribution~\cite{DSSX18,truong2019prophet}. The matching problem has also been studied in the \emph{two-stage} stochastic-optimization model \cite{katriel2007commitment}.
	
	The stochastic-matching problem is also related to the broader stochastic-packing literature, where the algorithm only knows a probability distribution over the item costs, and once it commits to include the item sees a realization of the actual costs \cite{DGV05,DGV08,BGLMNR12algo,BKNS12,BCNSX15,BSSX18}. Stochastic packing has also been studied in the online \cite{feldman2010online,devanur2019near,buchbinder2009design} and bandit \cite{guha2007approximation,badanidiyuru2018bandits,immorlica2018adversarial} settings.

\section{Stochastic Matching in Bipartite Graphs}
\label{sec:offline_bip}

In this section we present our improved approximation algorithm for Stochastic Matching in bipartite graphs. We start by presenting a simpler $0.351$-approximation in Section \ref{sub:Bipartite-graphs}, and then refine it in Section \ref{sec:newalg}.

\subsection{An Improved Approximation\label{sub:Bipartite-graphs}}

In this section we prove the following result.
\begin{theorem} \label{thr:mainOffline}
There is an expected $0.351$-approximation algorithm for Stochastic
Matching in bipartite graphs. 
\end{theorem}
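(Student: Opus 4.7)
The plan is to build on the LP-based framework of Bansal et al., improving each of the two main ingredients and then combining the two improved pieces. First I would solve the natural LP (the one alluded to as (LP-BIP) and used already in \cite{BGLMNR12algo,AGM15}) that upper-bounds \OPT\ by a sum of terms $x_e \cdot p_e \cdot w_e$, subject to fractional matching constraints and to the patience bounds $\sum_{e\in\delta(v)} x_e \leq t_v$. The algorithm will, at a high level, be a randomized mixture of two sub-algorithms, each itself built on top of a GKPS-type rounding of $\{x_e\}$: a \emph{non-uniform permutation algorithm} $\ALG_1$ tailored to perform well on small-probability edges, and a \emph{greedy algorithm} $\ALG_2$ tailored to large-probability edges. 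Since \OPT\ contributes $\sum_e x_e \, p_e \, w_e$ fractionally, it suffices to lower-bound, for each edge $e$, the probability that $e$ is probed by the combined algorithm, and to show it is at least $0.351 \cdot x_e$.

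For $\ALG_1$, I would first sample a set of \emph{candidate} edges using GKPS dependent rounding so that each edge is picked with marginal $x_e$ while fractional matching constraints are preserved as integral matching constraints at each vertex. Then, instead of probing the sampled candidates in uniform random order as in \cite{BGLMNR12algo}, I would assign each sampled edge $e$ a ``virtual time'' drawn independently from a distribution whose density depends only on $p_e$, and probe in increasing order of virtual time. The core lemma to prove is that $e$ is actually probed with probability at least $x_e \cdot g(p_e)$, where $g(\cdot)$ is the same decreasing function that appeared in \cite{BGLMNR12algo} with $g(1)=1/3$; the key swap relative to prior work is that the argument of $g$ is $p_e$ rather than $p_{\max} = \max_{f \in \delta(e)} p_f$. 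The proof follows by integrating, over the virtual time $\tau_e$ assigned to $e$, the probability that none of the incident edges with smaller virtual times has already consumed $e$'s endpoints; choosing the density for $e$ to favor earlier times when $p_e$ is large neutralizes the dependence on neighbors' probabilities and leaves only $p_e$ in the bound. This gives $\ALG_1$ an expected contribution from $e$ of at least $g(p_e) \cdot x_e \, p_e \, w_e$.

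For $\ALG_2$, I would use the GKPS rounding as before, but order the chosen edges at each vertex deterministically by \emph{decreasing} $p_e$ and probe in that order. A blocking-style analysis shows that $e$ is probed with a probability that tends to $x_e$ as $p_e \to 1$ (since, conditioned on being sampled, a large-probability edge tends to come early enough that neither endpoint has yet been matched). Concretely, I expect to obtain a bound of the form $\Pr[e \text{ probed}] \geq x_e \cdot h(p_e)$ for a function $h$ with $h(p) \to 1$ as $p \to 1$ and $h(0) = 0$, i.e.\ the mirror image of $g$. The two bounds combine naturally: for any $\lambda \in [0,1]$, running $\ALG_1$ with probability $\lambda$ and $\ALG_2$ with probability $1-\lambda$ gives per-edge ratio $\lambda\, g(p_e) + (1-\lambda)\, h(p_e)$, and the bottleneck is the value of $p_e$ that minimizes this expression. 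Optimizing $\lambda$ to balance the small-$p$ regime (dominated by $g$) against the large-$p$ regime (dominated by $h$) is what yields the claimed $0.351$.

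The main obstacle is the first lemma: showing that the non-uniform ordering yields the clean bound $x_e \cdot g(p_e)$ despite the correlations introduced by GKPS. I would handle this by conditioning on the outcome of the rounding at $e$'s two endpoints, using the negative-correlation guarantee of GKPS to decouple the sampling of incident edges, and then integrating over $\tau_e$ with the chosen density. A secondary technical point is that $\ALG_2$'s bound must be genuinely tight enough near $p_e = 1$ to compensate for the loss in $\ALG_1$; if the greedy bound is weaker than anticipated one can instead analyze $\ALG_2$ on a restricted LP where $x_e$ at large-$p$ edges is rescaled, at the cost of a messier but still $0.351$-level calculation. The refinement to the stronger $0.39$ ratio of Theorem~\ref{thr:newalg} is deferred to Section~\ref{sec:newalg}, where a tighter patching of the two algorithms is carried out.
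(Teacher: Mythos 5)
Your $\ALG_1$ is essentially the paper's: GKPS rounding followed by probing in increasing order of independently sampled ``virtual times'' whose density depends only on $p_e$, together with a proof via integration and the negative-correlation/marginal properties that $\Pr[e\text{ probed}\mid e\in\hat E]\geq g(p_e)$ with $g(1)=1/3$. That part is sound and matches Lemma~\ref{lem:safe-bip}.

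The gap is in $\ALG_2$ and the combination. You propose GKPS rounding followed by a deterministic decreasing-$p_e$ ordering, and you assert a per-edge bound $\Pr[e\text{ probed}]\geq x_e\,h(p_e)$ with $h(p)\to 1$ as $p\to 1$. This cannot hold for any ordering-then-probe scheme on the GKPS sample: take $e=(u,v)$ with $p_e$ close to $1$ but $x_e$ small, and let $u$ (and $v$) have many other incident edges $f$ with $p_f=1$ saturating $\sum_{f\in\delta(u)}p_fx_f\leq 1$. All such $f$ tie or precede $e$ in your order, each blocks $e$ with certainty if sampled, and the probability that none is sampled is bounded away from $1$ (roughly $e^{-1}$ per endpoint), so $h(1)\lesssim e^{-2}\approx 0.14$, far below $1/3$ and useless as a patch. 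The paper avoids this entirely: its $\Gre$ does not sample via GKPS and probe a random subset, but computes a maximum-weight matching $M_{grd}$ with respect to weights $w_ep_e$ and probes exactly the edges of $M_{grd}$, so there is no blocking at all and $\ex{\Gre}=\sum_{e\in M_{grd}}w_ep_e$. This is not bounded per edge by $x_e\cdot(\text{something})$; instead one scales the LP solution on large edges by $\delta$, invokes bipartite matching-polytope integrality, and obtains the aggregate bound $\ex{\Gre}\geq\delta\cdot LP_{large}=\delta\gamma\,LP_{bip}$. Finally, the paper does not use a fixed randomized mixture: it computes $\gamma=LP_{large}/LP_{bip}$ from the LP solution and deterministically runs whichever of $\ALG_1$, $\Gre$ has the larger analytic guarantee, giving $\max\{\tfrac{\gamma}{3}+(1-\gamma)g(\delta),\ \gamma\delta\}$, which is minimized when the two are equal; optimizing the threshold $\delta$ (at $\delta\approx 0.6$) then yields $0.351$. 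A fixed-$\lambda$ mixture is dominated by this choice and a $\lambda$ chosen as a function of $\gamma$ reduces to it, so the per-edge $\lambda g(p_e)+(1-\lambda)h(p_e)$ analysis you outline is neither necessary nor, given the failure of $h$, available.
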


Let $\OPT$ denote an optimal probing strategy and let $\ex{\OPT}$
denote its expected value. Consider the following LP: 
\begin{align}
\max & \sum_{e\in E}w_{e}p_{e}x_{e} & \br{\mbox{LP-BIP}}\label{lp:offlineBench}  \\
\mbox{s.t.} & \sum_{e\in\delta(v)}p_{e}x_{e}\leq1, & \forall v\in V;\label{cardinalityConstraints}\\
 & \sum_{e\in\delta(v)}x_{e}\leq t_{v}, & \forall v\in V;\label{toleranceConstraints}\\
 & 0\leq x_{e}\leq1, & \forall e\in E.
\end{align}
The proof of the following Lemma is already quite standard~\cite{DBLP:conf/stacs/AdamczykSW14,BGLMNR12algo,DGV08}
--- just note that $x_{e}=\pr{\OPT\mbox{ probes }e}$ is a feasible
solution of (LP-BIP). 
\begin{lemma}\label{lem:Bansal} \cite{BGLMNR12algo} 
Let $\LP_{bip}$
be the optimal value of \emph{(LP-BIP)}. It holds that $\LP_{bip}\geq\ex{\OPT}$.
\end{lemma}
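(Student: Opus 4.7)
The plan is the standard LP-relaxation argument: exhibit a feasible solution to (LP-BIP) whose objective value equals $\ex{\OPT}$, which immediately gives $\LP_{bip} \geq \ex{\OPT}$. Concretely, I would set
\[
x_e := \Pr[\OPT \text{ probes } e] \qquad \text{for every } e\in E,
\]
where the probability is over both the randomness of the instance (edge realizations) and any internal randomization of $\OPT$.

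Next I would verify feasibility constraint-by-constraint. For the matching constraint \eqref{cardinalityConstraints}, the key identity is that $p_e x_e = \Pr[e \text{ is probed}] \cdot p_e = \Pr[e \text{ ends up in } \OPT\text{'s matching}]$, using that an edge is declared present independently with probability $p_e$ at probe time and that in the query-commit model a present probed edge is immediately added. Since at most one edge of $\delta(v)$ can appear in any matching, the events ``$e$ is matched'' for $e\in \delta(v)$ are disjoint, so $\sum_{e\in\delta(v)} p_e x_e \leq 1$. For the patience constraint \eqref{toleranceConstraints}, I would argue that $\OPT$ never probes more than $t_v$ edges incident to $v$ along any sample path; thus $\sum_{e\in \delta(v)} \mathbf{1}[\OPT \text{ probes } e] \leq t_v$ pointwise, and taking expectations yields $\sum_{e\in\delta(v)} x_e \leq t_v$. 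The box constraints are immediate since each $x_e$ is a probability.

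Finally, the LP objective at this $x$ is
\[
\sum_{e\in E} w_e p_e x_e \;=\; \sum_{e\in E} w_e \Pr[e \in M_{\OPT}] \;=\; \ex{\OPT},
\]
where $M_{\OPT}$ is the (random) matching produced by $\OPT$, and the last equality is linearity of expectation applied to the total weight of $M_{\OPT}$. Since this $x$ is feasible, it lower-bounds $\LP_{bip}$, concluding the proof.

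There is essentially no hard step: the only mild subtlety is being careful that $x_e$ captures the joint law of ``edge is probed,'' independently of whether $e$ is ultimately realized present, so that the identity $p_e x_e = \Pr[e \in M_{\OPT}]$ is valid; this is where independence of the realizations from $\OPT$'s adaptive probing decisions (conditioned on the outcomes of previous probes) is used. Everything else is a direct reading of the problem constraints.
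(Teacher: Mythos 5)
Your proposal is correct and takes the same route the paper sketches: the paper simply notes that $x_e=\Pr[\OPT \text{ probes } e]$ is feasible for (LP-BIP), and you spell out the feasibility check (disjointness of matching events for \eqref{cardinalityConstraints}, pointwise patience bound plus linearity for \eqref{toleranceConstraints}) and the objective calculation.
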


Our approach is similar to the one of Bansal et al.~\cite{BGLMNR12algo}
(see also Algorithm \ref{alg:bipartite} in the figure). We solve
(LP-BIP): let $x=(x_e)_{e\in E}$ be the optimal fractional solution. Then we apply
to $x$ the rounding procedure by Gandhi et al.~\cite{GKPS06}, which
we shall call just GKPS. Let $\hat{E}$ be the set of rounded edges,
and let $\hat{x}_{e}=1$ if $e\in\hat{E}$ and $\hat{x}_{e}=0$ otherwise.
GKPS guarantees the following properties of the rounded solution: 
\begin{enumerate}
\item (Marginal distribution) For any $e\in E$, $\pr{\hat{x}_{e}=1}=x_{e}.$ 
\item (Degree preservation) For any $v\in V$, $\sum_{e\in\delta(v)}\hat{x}_{e}\leq\lceil\sum_{e\in\delta(v)}x_{e}\rceil\leq t_{v}.$ 
\item (Negative correlation) For any $v\in V$, any subset $S\subseteq\delta(v)$
of edges incident to $v$, and any $b\in\{0,1\}$, it holds that $\pr{\bigwedge_{e\in S}(\hat{x}_{e}=b)}\leq\prod_{e\in S}\pr{\hat{x}_{e}=b}.$ 
\end{enumerate}
Our algorithm sorts the edges in $\hat{E}$ according to a certain random
permutation and probes each edge $e\in\hat{E}$ according to this
order, but provided that the endpoints of $e$ are not matched already.
It is important to notice that, by the degree-preservation property, $\hat{E}$ has at most $t_{v}$ edges incident to each node
$v$. Hence, the timeout constraint of $v$ is respected even if the
algorithm probes all the edges in $\delta(u)\cap\hat{E}.$

Our algorithm differs from~\cite{BGLMNR12algo} and subsequent work
in the way edges are randomly ordered. Prior work exploits a 
uniformly-random order on $\hat{E}$. We rather use the following, more complex
strategy. For each $e\in \hat{E}$ we draw a random variable $Y_{e}$ distributed
on the interval $\brq{0,\frac{1}{p_{e}}\ln\frac{1}{1-p_{e}}}$ according
to the following cumulative distribution: $\pr{Y_{e}\leq y}=\frac{1}{p_{e}}\br{1-e^{-p_{e}y}}.$
Observe that %
the density function of $Y_{e}$ in this interval is $e^{-yp_{e}}$
(and zero otherwise). Edges of $\hat{E}$ are sorted in increasing order of the $Y_{e}$'s,
and they are probed according to that order. We 
let $Y$ denote the vector $(Y_{e})_{e\in \hat{E}}$, wherein the elements of $\hat{E}$ are ordered in some fixed manner. 

Define $\hat{\delta}(v):=\adj v\cap\E$. We say that an edge $e\in\hat{E}$
is \emph{safe} if, at the time we consider $e$ for probing, no other
edge $f\in\hat{\delta}(e)$ has already been taken into the matching. Note
that the algorithm can probe $e$ only in this case, and that if we do
probe $e$, it gets added to the matching with probability $p_{e}$  independent of all other events. 

\begin{algorithm}
\label{alg:bipartite} \caption{An algorithm for Bipartite Stochastic Matching $(\ALG_1)$.}
\begin{enumerate}
\item Let $\br{x_{e}}_{e\in E}$ be the solution to (LP-BIP). 
\item Round the solution $\br{x_{e}}_{e\in E}$ with GKPS; let $(\h x_{e})_{e\in E}$
be the rounded 0-1 solution, and $\h E=\{e\in E|\h x_{e}=1\}$. 
\item For every $e\in\h E$, sample a random variable $Y_{e}$ distributed
as $\pr{Y_{e}\leq y}=\frac{1-e^{-yp_{e}}}{p_{e}}$. 
\item For every $e\in\h E$ in increasing order of $Y_{e}$:

\begin{enumerate}
\item If no edge $f\in\hat{\delta}(e):=\delta(e)\cap\hat{E}$ is yet taken,
then probe edge $e$. 
\end{enumerate}
\end{enumerate}
\end{algorithm}

The main ingredient of our analysis is the following lower-bound on
the probability that an arbitrary edge $e$ is safe.\begin{lemma}
\label{lem:safe-bip} For every edge $e$ it holds that $\prcond{e\mbox{ is safe}}{e\in\hat{E}}\geq g\br{p_{e}}$,
where $$g\br p:=\frac{1}{2+p}\br{1-\exp\br{-\br{2+p}\frac{1}{p}\ln\frac{1}{1-p}}}.$$\end{lemma}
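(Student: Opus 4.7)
The plan is to establish the sharper pointwise bound
\[
\Pr[e\text{ is safe}\mid Y_e = y,\ e\in\hat E]\ \ge\ e^{-2y} \qquad \text{for all } y\in[0,T_e],
\]
with $T_e := \tfrac{1}{p_e}\ln\tfrac{1}{1-p_e}$, and then integrate against the density $e^{-yp_e}$ of $Y_e$:
\[
\Pr[e\text{ is safe}\mid e\in\hat E]\ \ge\ \int_0^{T_e} e^{-2y}\,e^{-yp_e}\,dy\ =\ \frac{1-(1-p_e)^{(2+p_e)/p_e}}{2+p_e}\ =\ g(p_e),
\]
where the closed form uses $T_ep_e=\ln\tfrac{1}{1-p_e}$. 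This is exactly the stated inequality.

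For the pointwise bound, let $u,v$ denote the endpoints of $e$. I would relax the complicated event ``$f$ is matched by time $y$'' to the simpler implied event that $f$ is \emph{ready} by time $y$, meaning $\hat x_f=1$, $f$ exists, and $Y_f<y$; since any actually matched $f$ is in particular ready,
\[
\Pr[e\text{ safe}\mid Y_e=y,\hat x_e=1]\ \ge\ \Pr\!\big[\text{no neighbour of }e\text{ is ready}\,\big|\,Y_e=y,\hat x_e=1\big].
\]
The crucial observation is that, after conditioning on the rounded set $\hat E$, the readiness events are mutually independent across $f$, because the $Y_f$'s and the existence flips are independent of one another and of $\hat E$; each $f\in\hat E$ contributes a factor $\Pr[Y_f\ge y\text{ or }f\text{ doesn't exist}]\ge e^{-yp_f}$. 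Taking the expectation over $\hat E$, the right-hand side becomes
\[
\mathbb{E}\!\left[\exp\!\Big(-y\!\!\!\!\!\!\sum_{f\in(\delta(u)\cup\delta(v))\setminus\{e\}}\!\!\!\!\!\! p_f\,\hat x_f\Big)\;\Big|\;\hat x_e=1\right].
\]
Jensen's inequality applied to the convex map $t\mapsto e^{-yt}$ pulls the expectation inside the exponential, and then GKPS negative correlation ($\mathbb{E}[\hat x_f\mid\hat x_e=1]\le x_f$ for every $f$ sharing an endpoint with $e$), combined with the two LP capacity constraints~\eqref{cardinalityConstraints} at $u$ and $v$, gives
\[
\sum_{f\in(\delta(u)\cup\delta(v))\setminus\{e\}} p_f\,x_f\ \le\ 2-2p_ex_e\ \le\ 2,
\]
which delivers $\Pr[e\text{ safe}\mid Y_e=y,\hat x_e=1]\ge e^{-2y}$.

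The main obstacle is precisely the passage from the linear LP budget to an exponential estimate: a plain union bound across the neighbours of $e$ yields only $\Pr[e\text{ safe}\mid Y_e=y]\ge 1-2y$, which is strictly weaker than $e^{-2y}$ and, when integrated against $e^{-yp_e}$, falls short of $g(p_e)$ (indeed it turns negative as $y$ approaches $T_e$ when $p_e$ is close to $1$). Keeping the estimate multiplicative via the ``ready'' relaxation is what makes Jensen applicable, and Jensen is what turns the LP budget $\sum_f p_f x_f\le 2$ into the exponential factor $e^{-2y}$. Once this is in place the final integration is routine and recovers exactly the closed form of $g(p_e)$.
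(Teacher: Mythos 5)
Your proposal is correct and reproduces the paper's own argument in all essentials: the same "every prior neighbour must fail" relaxation (your ``ready'' framing), the same pointwise bound $1 - p_f\Pr[Y_f<y]\ge e^{-p_f y}$, the same use of GKPS negative correlation plus the two LP degree constraints to bound $\mathbb{E}[\sum_{f\in\hat\delta(e)}p_f\mid\hat x_e=1]\le 2$, and the same appeal to Jensen. The only cosmetic difference is that you apply Jensen inside the integral (for each fixed $y$) whereas the paper applies it to the whole integrated quantity $x\mapsto\frac{1-e^{-(x+p_e)T_e}}{x+p_e}$; by Fubini and convexity these give the identical bound $g(p_e)$.
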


\begin{proof} In the worst case every edge $f\in\hat{\delta}(e)$
that is before $e$ in the ordering can be probed, and each of these
probes has to fail for $e$ to be safe. Thus 
\[
\prcond{e\mbox{ is safe}}{e\in\E}\geq\excondls{\E\setminus e,Y}{\prod_{f\in\hat{\delta}(e):Y_{f}<Y_{e}}\br{1-p_{f}}}{e\in\E}.
\]
Now we take expectation on $Y$ only, and using the fact that the
variables $Y_{f}$ are independent, we can write the latter expectation
as 
\begin{align}
 & \excondls{\E\setminus e}{\int_{0}^{\frac{1}{p_{e}}\ln\frac{1}{1-p_{e}}}\br{\prod_{f\in\hat{\delta}(e)}\br{\pr{Y_{f}\leq y}(1-p_{f})+\pr{Y_{f}>y}}}e^{-p_{e}\cdot y}\mbox{d}y}{e\in\E}.\label{eq:conditional}
\end{align}
Observe that $\pr{Y_{f}\leq y}\br{1-p_{f}}+\pr{Y_{f}>y}=1-p_{f}\pr{Y_{f}\leq y}.$
When $y>\frac{1}{p_{f}}\ln\frac{1}{1-p_{f}}$, then $\pr{Y_{f}\leq y}=1$,
and moreover, $\frac{1}{p_{f}}(1-e^{-p_{f}\cdot y})$ is an increasing
function of $y$. Thus we can upper-bound $\pr{Y_{f}\leq y}$ by $\frac{1}{p_{f}}(1-e^{-p_{f}\cdot y})$
for any $y\in\brq{0,\infty}$, and obtain that $1-p_{f}\pr{Y_{f}\leq y}\geq1-p_{f}\frac{1}{p_{f}}(1-e^{-p_{f}\cdot y})=e^{-p_{f}\cdot y}.$
Thus~\eqref{eq:conditional} can be lower-bounded by 
\begin{align*}
 & \excondls{\E\setminus e}{\int_{0}^{\frac{1}{p_{e}}\ln\frac{1}{1-p_{e}}}e^{-\sum_{f\in\hat{\delta}\br e}p_{f}\cdot y-p_{e}\cdot y}\mbox{d}y}{e\in\E}\\
= & \excondls{\E\setminus e}{\frac{1}{\sum_{f\in\hat{\delta}\br e}p_{f}+p_{e}}\br{1-e^{-\br{\sum_{f\in\hat{\delta}\br e}p_{f}+p_{e}}\frac{1}{p_{e}}\ln\frac{1}{1-p_{e}}}}}{e\in\E}.
\end{align*}

\end{proof}

We know from the negative-correlation and marginal-distribution properties that $\excondls{\E\setminus e}{\h x_{f}}{e\in\E}\leq\exls{\h E\setminus e}{\h x_{f}}=x_{f}$
for every $f\in\delta\br e$, and therefore $\excondls{\E\setminus e}{\sum_{f\in\hat{\delta}\br e}p_{f}}{e\in\h E}\leq\sum_{f\in\delta\br e}p_{f}x_{f}\leq2$,
where the last inequality follows from the LP constraints. Consider
function $f(x):=\frac{1}{x+p_{e}}\br{1-e^{-\br{x+p_{e}}\frac{1}{p_{e}}\ln\frac{1}{1-p_{e}}}}$.
This function is decreasing and convex. From Jensen's inequality we
know that $\ex{f(x)}\geq f(\ex{x})$. Thus 
\begin{multline*}
\excondls{\E\setminus e}{f\br{\sum_{f\in\hat{\delta}\br e}p_{f}}}{e\in\h E}\geq f\br{\excondls{\E\setminus e}{\sum_{f\in\hat{\delta}\br e}p_{f}}{e\in\h E}}\\
\hfill\geq f(2)=\frac{1}{2+p_{e}}\br{1-e^{-\br{2+p_{e}}\frac{1}{p_{e}}\ln\frac{1}{1-p_{e}}}}=g(p_{e}).\hfill\square
\end{multline*}

From Lemma \ref{lem:safe-bip} and the marginal distribution property,
the expected contribution of edge $e$ to the profit of the solution
is 
\[
w_{e}p_{e}\cdot\pr{e\in\hat{E}}\cdot\prcond{e\text{ is safe}}{e\in\hat{E}}\geq w_{e}p_{e}x_{e}\cdot g(p_{e})\geq w_{e}p_{e}x_{e}\cdot g(1)=\frac{1}{3}w_{e}p_{e}x_{e}.
\]
Therefore, our analysis implies a $1/3$ approximation, matching the
result in \cite{BGLMNR12algo}. However, by working with the probabilities
appropriately, we can do better as described next. 

\paragraph{Patching with Greedy.} We next describe an improved approximation algorithm, based on the
patching of the above algorithm with a simple greedy one. Let $\delta\in(0,1)$
be a parameter to be fixed later. We define $\lrg E$ as the set of (\emph{large})
edges $e$ with $p_{e}\geq\delta$, and let $E_{small}$ be the remaining
(\emph{small}) edges. Recall that $\LP_{bip}$ denotes the optimal
value of (LP-BIP). Let also $\lrg{\LP}$ and $\sml{\LP}$ be the fraction
of $\LP_{bip}$ due to large and small edges, respectively, i.e., $\lrg{\LP}=\sum_{e\in\lrg E}w_{e}p_{e}x_{e}$
and $\sml{LP}=\LP_{bip}-\lrg{\LP}$. Define $\gamma\in[0,1]$ such that
$\gamma \LP_{bip}=\LP_{large}$. By refining the above analysis, we
obtain the following result. \begin{lemma} \label{lem:bipRefined}
Algorithm \ref{alg:bipartite} has an expected approximation ratio $\frac{1}{3}\gamma+g(\delta)\br{1-\gamma}$.
\end{lemma}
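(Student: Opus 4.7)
The proof will be a direct refinement of the bookkeeping in the analysis already carried out for Algorithm~\ref{alg:bipartite}. From Lemma~\ref{lem:safe-bip} together with the marginal-distribution property of GKPS, the expected contribution of any edge $e$ to the profit is at least $w_e p_e x_e \cdot g(p_e)$. The previous argument threw away all information about $p_e$ by using the crude bound $g(p_e) \geq g(1) = 1/3$; the plan is to use $g(p_e)$ honestly, separated by whether $e$ is small or large.

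More concretely, the first step is to verify that $g$ is non-increasing on $(0,1]$, so that for $e \in E_{\mathrm{small}}$ (i.e.\ $p_e < \delta$) we have $g(p_e) \geq g(\delta)$, while for $e \in E_{\mathrm{large}}$ (i.e.\ $p_e \geq \delta$) we have $g(p_e) \geq g(1) = \tfrac{1}{3}$. Monotonicity of $g$ is essentially the same fact already used (implicitly) to conclude $g(p_e) \geq g(1)$ in the proof of the $1/3$-approximation, so no new ingredient is required.

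Given this, summing the per-edge lower bound yields
\begin{align*}
\mathbb{E}[\ALG_1]
&\ \geq\ \sum_{e \in E_{\mathrm{large}}} w_e p_e x_e \cdot g(p_e) \;+\; \sum_{e \in E_{\mathrm{small}}} w_e p_e x_e \cdot g(p_e)\\
&\ \geq\ \tfrac{1}{3}\,\sum_{e \in E_{\mathrm{large}}} w_e p_e x_e \;+\; g(\delta)\,\sum_{e \in E_{\mathrm{small}}} w_e p_e x_e\\
&\ =\ \tfrac{1}{3}\,\LP_{\mathrm{large}} + g(\delta)\,\LP_{\mathrm{small}}\\
&\ =\ \bigl(\tfrac{1}{3}\gamma + g(\delta)(1-\gamma)\bigr)\,\LP_{bip},
\end{align*}
by the definitions of $\gamma$, $\LP_{\mathrm{large}}$ and $\LP_{\mathrm{small}}$. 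Finally, Lemma~\ref{lem:Bansal} gives $\LP_{bip} \geq \mathbb{E}[\OPT]$, from which the claimed approximation ratio of $\tfrac{1}{3}\gamma + g(\delta)(1-\gamma)$ follows.

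There is no real obstacle here: the lemma is a one-line consequence of the already-established per-edge bound of Lemma~\ref{lem:safe-bip}, together with the monotonicity of $g$. If a sticking point arises, it is only the (routine) monotonicity check for $g$, which can be handled by differentiating or by observing that $g(p) = \int_0^{\frac{1}{p}\ln\frac{1}{1-p}} e^{-(2+p)y}\,\mathrm{d}y$ and noting that both the upper limit and the integrand are non-increasing in $p$.
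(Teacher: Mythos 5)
Your proof follows exactly the same route as the paper's: split the per-edge lower bound $w_e p_e x_e\, g(p_e)$ from Lemma~\ref{lem:safe-bip} over $E_{\mathrm{large}}$ and $E_{\mathrm{small}}$, bound $g(p_e)$ by $g(1)=1/3$ on large edges and by $g(\delta)$ on small edges, and identify the two sums with $\LP_{\mathrm{large}}$ and $\LP_{\mathrm{small}}$. This matches the paper's argument line for line.

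One caveat concerning your suggested shortcut for monotonicity of $g$: writing $g(p)=\int_0^{\frac{1}{p}\ln\frac{1}{1-p}} e^{-(2+p)y}\,\mathrm{d}y$ is correct, and the integrand is indeed non-increasing in $p$, but the upper limit $T(p)=\frac{1}{p}\ln\frac{1}{1-p}$ is \emph{increasing} in $p$ (for instance $T(p)\to 1$ as $p\to 0^+$ while $T(p)\to\infty$ as $p\to 1^-$; one can also check $T'(p)=\bigl(\tfrac{p}{1-p}+\ln(1-p)\bigr)/p^2>0$). So the two effects on the integral pull in opposite directions and the observation as stated does not establish that $g$ is non-increasing. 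The function $g$ is nonetheless decreasing on $(0,1]$, which can be verified by differentiation or a more careful estimate; the paper leaves this implicit as well. Since this is only the supporting monotonicity fact and the rest of your argument is exactly the paper's, the proposal is essentially correct modulo replacing that justification.
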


\begin{proof} The expected profit of the algorithm is at least: 
\begin{multline*}
\sum_{e\in E}w_{e}p_{e}x_{e}\cdot g(p_{e})\geq\sum_{e\in E_{large}}w_{e}p_{e}x_{e}\cdot g(1)+\sum_{e\in E_{small}}w_{e}p_{e}x_{e}\cdot g(\delta)\\
=\frac{1}{3}LP_{large}+g(\delta)LP_{small}=\left(\frac{1}{3}\gamma+g(\delta)\br{1-\gamma}\right)\LP_{bip}.\hfill\square
\end{multline*}
\end{proof}

\xhdr{A greedy algorithm} ($\Gre$). Consider the following greedy algorithm $\Gre$. Compute a maximum weight
matching $M_{grd}$ in $G$ with respect to edge weights $w_{e}p_{e}$,
and probe the edges of $M_{grd}$ in any order. Note that the timeout
constraints are satisfied since we probe at most one edge incident
to each node (and timeouts are strictly positive by definition and
w.l.o.g.). 
\begin{lemma} \label{lem:greedy} $\Gre$
has an expected approximation ratio of at least $\delta\gamma$. \end{lemma}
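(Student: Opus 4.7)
The plan is to reduce the claim directly to a single LP integrality argument on the set of large edges. First I would note that $\Gre$'s expected profit is exactly $\sum_{e \in M_{grd}} w_e p_e$: since $M_{grd}$ is a matching, each vertex is incident to at most one probed edge, so every edge of $M_{grd}$ is actually probed (regardless of success/failure of previously probed edges), timeout constraints are trivially met ($t_v \geq 1$), and each probe contributes $w_e p_e$ in expectation.

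The next step is to lower-bound $\sum_{e \in M_{grd}} w_e p_e$ in terms of $\LP_{large} = \gamma \LP_{bip}$. Consider the truncated vector $y_e := x_e$ for $e \in \Elarge$ and $y_e := 0$ otherwise, where $x$ is the optimal solution of (LP-BIP). From constraint \eqref{cardinalityConstraints} we have $\sum_{e \in \delta(v)} p_e y_e \leq \sum_{e \in \delta(v)} p_e x_e \leq 1$ for every $v$, so $(p_e y_e)_{e \in E}$ is a feasible fractional matching in the bipartite graph $G$. Its value under the edge weights $w_e$ equals $\sum_{e \in \Elarge} w_e p_e x_e = \LP_{large}$.

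Now I would invoke integrality of the bipartite matching polytope: there exists an integral matching $M^* \subseteq \Elarge$ with
\[
\sum_{e \in M^*} w_e \;\geq\; \LP_{large} \;=\; \gamma \LP_{bip}.
\]
Because every $e \in \Elarge$ satisfies $p_e \geq \delta$, we can convert this weight bound to the $w_e p_e$ scale:
\[
\sum_{e \in M^*} w_e p_e \;\geq\; \delta \sum_{e \in M^*} w_e \;\geq\; \delta \gamma \LP_{bip}.
\]
Finally, since $M_{grd}$ is the maximum-weight matching under edge weights $w_e p_e$, we have $\sum_{e \in M_{grd}} w_e p_e \geq \sum_{e \in M^*} w_e p_e$, and combining this with Lemma~\ref{lem:Bansal} ($\LP_{bip} \geq \ex{\OPT}$) yields the desired $\delta \gamma$-approximation bound.

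There is no real obstacle here; the only point that needs some care is making sure the truncation to $\Elarge$ still gives a feasible (fractional) bipartite matching, which follows immediately from the LP's degree constraint with the $p_e$'s baked in. The mild subtlety to flag is that this argument is genuinely bipartite-specific (it uses LP integrality), which is exactly why the later extension to general graphs requires the blossom-strengthened LP mentioned in the introduction.
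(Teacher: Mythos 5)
Your proof is correct and follows essentially the same strategy as the paper's: restrict the LP solution to the large edges, observe that it induces a feasible fractional bipartite matching, invoke integrality of the bipartite matching polytope, and pick up the factor $\delta$ from $p_e \geq \delta$. The only cosmetic difference is where $\delta$ enters: the paper scales the fractional solution by $\delta$ up front (so LP-MATCH with objective $w_e p_e z_e$ and constraint $\sum_{e\in\delta(u)}z_e\le 1$ is directly fed $x'_e=\delta x_e$), whereas you first extract an integral matching $M^*$ with large $\sum w_e$ and only then convert to the $w_e p_e$ scale via $p_e\geq\delta$; both give the same bound and both rely on the same integrality fact.
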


\begin{proof} It is sufficient to show that the expected profit of
the obtained solution is at least $\delta\cdot\lrg{LP}$. Let $x=(x_{e})_{e\in E}$
be the optimal solution to (LP-BIP). Consider the solution $x'=(x'_{e})_{e\in E}$
that is obtained from $x$ by setting to zero all the variables corresponding
to edges in $E_{small}$, and by multiplying all the remaining variables
by $\delta$. Since $p_{e}\geq\delta$ for all $e\in\lrg E$, $x'$
is a feasible fractional solution to the following matching LP: 
\begin{align}
\max & \sum_{e\in E}w_{e}p_{e}z_{e} & \text{(LP-MATCH)}\label{lp:match} \\
\mbox{s.t.} & \sum_{e\in\delta(u)}z_{e}\leq1, & \forall u\in V;\nonumber \\
 & 0\leq z_{e}\leq1, & \forall e\in E.
\end{align}
The value of $x'$ in the above LP is $\delta\cdot \LP_{large}$ by
construction. Let $\LP_{match}$ be the optimal profit of (LP-MATCH).
Then $\LP_{match}\geq\delta\cdot \LP_{large}$. Given that the graph
is bipartite, (LP-MATCH) defines the matching polyhedron, and we can
find an integral optimal solution to it. But such a solution is exactly
a maximum weight matching according to weights $w_{e}p_{e}$, i.e. $\sum_{e\in M_{grd}}w_{e}p_{e}=\LP_{match}$. The claim follows since the expected profit of the greedy algorithm is precisely the weight of $M_{grd}$. $\hfill\square$ 
\end{proof}

\xhdr{A hybrid algorithm of $\ALG_1$ and $\Gre$}. The overall algorithm, denoted by $\hybrid (\delta)$ is stated as follows. For a given $\delta$, we simply compute the
value of $\gamma$, and run $\Gre$ if $\gamma\delta\geq\left(\frac{1}{3}\gamma+g(\delta)\br{1-\gamma}\right)$,
and $\ALG_1$ otherwise\footnote{Note that we cannot run both algorithms and take the better solution, due to the probe-commit constraint}.

The approximation factor of $\hybrid(\delta)$ is given by $\max\{\frac{\gamma}{3}+(1-\gamma)g(\delta),\gamma\delta\}$,
and the worst case is achieved when the two quantities are equal,
i.e., for $\gamma=\frac{g\br{\delta}}{\delta+g\br{\delta}-\frac{1}{3}}$,
yielding an approximation ratio of $\frac{\delta\cdot g\br{\delta}}{\delta+g\br{\delta}-\frac{1}{3}}$.
Now we just need to maximize this ratio. Since this function is quite complicated and so finding algebraically its maximum seems impossible, we need to compute it numerically. To avoid issues of numerical error, let us just notice that for $\delta=0.6$, the ratio is approximately $0.351563$ -- which allows us to claim the ratio of $0.351$ from Theorem~\ref{thr:mainOffline}.


\subsection{A Refined Approximation}
	\label{sec:newalg}

	We now describe the approach to achieve the $0.39$-approximation ratio stated in Theorem~\ref{thr:newalg}. The main algorithm, denoted by $\MAIN$, consists of two sub-routines. One is $\ALG_1$ as described in Algorithm~\ref{alg:bipartite} in Section~\ref{sub:Bipartite-graphs}. The other is a new \emph{patching} algorithm, denoted by $\ALG_2$, which is described in Algorithm~\ref{alg:ALG2}.
			\begin{algorithm}[!h]
			\caption{Patching algorithm ($\ALG_2$)}
				\label{alg:ALG2}
			\DontPrintSemicolon
		Construct and solve (LP-BIP) for the input instance. Let $\vec{x}$ be an optimal solution. \;
		Consider the vector $\vec{y}$ such that $y_e := p_e x_e$ for every edge $e \in E$. \;
		Run $\GKPS$ with $\vec{y}$ as the input to obtain an integral vector $\vec{Y}$. \;
		Probe all edges $e \in E$ with $Y_e=1$. \;
	\end{algorithm}

Let $\vec{x}$ be the optimal solutions to (LP-BIP). Recall the definition of function $g: [0, 1] \rightarrow [0, 1]$ from Lemma~\ref{lem:safe-bip}:
\[
g(p) := \frac{1}{2+p} \left( 1- \exp\left( - \frac{2+p}{p} \ln \frac{1}{1-p} \right) \right).
\]
From the same Lemma~\ref{lem:safe-bip} we know that the expected total weight achieved by $\ALG_1$ is 
\[
\ex{ALG_1} \geq \sum_{e \in E} w_e p_e x_e\cdot g(p_e).
\]
Consider now vector $\vec{y}$ defined as $y_e := p_ex_e$. Notice that vector $\vec{y}$ is a feasible solution to (LP-MATCH). And in  Algorithm~\ref{alg:ALG2} we can see that it is $\vec{y}$ that guides $\ALG_2$.

We shall prove in a moment that the expected outcome of $\ALG_2$ is
\[
\ex{ALG_2} = \sum_{e \in E} w_e p^2_e x_e.
\]

Our main algorithm, $\MAIN$, is formally stated in Algorithm~\ref{alg:ALG3}.
			\begin{algorithm}[!h]
			\caption{The main algorithm ($\MAIN$).}
				\label{alg:ALG3}
			\DontPrintSemicolon
		Construct and solve (LP-BIP) for the input instance; let $\vec{x}$ be its optimal solution. \;
	Run $\ALG_1$ if $\sum_{e \in E} w_e p_e x_e\cdot g(p_e) \geq \sum_{e \in E} w_e p^2_e x_e$; otherwise run $\ALG_2$.
	\end{algorithm}

	


To prove Theorem~\ref{thr:newalg}, it remains to bound the expected outcome of $\ALG_2$.
 We then use lowerbounds on $\ex{ALG_1}$ and $\ex{ALG_2}$ to bound the approximation achieved by $\MAIN$. 		
We now lower bound the profit of $\ALG_2$ in the following lemma.
		\begin{lemma}
			\label{lem:ALG2}
			The total expected weight of the matching obtained by $\ALG_2$ is $\sum_{e \in E} w_e p^2_e x_e$ where $\vec{x}$ denotes the optimal solution to (LP-BIP).
		\end{lemma}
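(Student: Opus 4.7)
The proof plan hinges on showing that the GKPS-rounded vector $\vec{Y}$ obtained from $\vec{y} = (p_e x_e)_{e\in E}$ is the indicator of an actual matching in $G$, so that in Step~4 of $\ALG_2$ every probed edge is independent of the others and the expected gain per edge separates into a clean product.

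First, I would verify that $\vec{y}$ is a valid input to GKPS, i.e.\ a fractional matching. This follows immediately from the first family of constraints \eqref{cardinalityConstraints} of (LP-BIP): for every $v \in V$,
\[
\sum_{e \in \delta(v)} y_e \;=\; \sum_{e \in \delta(v)} p_e x_e \;\leq\; 1.
\]
Hence $\vec{y}$ lies in the fractional matching polytope, and GKPS returns an integral vector $\vec{Y}$ satisfying the marginal distribution property $\pr{Y_e=1} = y_e = p_e x_e$ and the degree-preservation property $\sum_{e\in\delta(v)} Y_e \leq \lceil \sum_{e\in\delta(v)} y_e \rceil \leq 1$ for every vertex $v$.

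Second, I would observe that the degree-preservation bound of $1$ at every vertex means that $\{e : Y_e = 1\}$ is a (random) matching $M$ of $G$. Consequently, when $\ALG_2$ probes the edges of $M$, no two probed edges share an endpoint, so (i) the timeout constraints are automatically satisfied since each vertex is touched at most once (recall $t_v \geq 1$), and (ii) each probe succeeds or fails independently of the others; in particular no edge $e \in M$ can be blocked by an earlier probed edge in $\delta(e)$.

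Third, I would compute the expected contribution of a fixed edge $e$ to the weight of the matching produced by $\ALG_2$. Edge $e$ is probed exactly when $Y_e = 1$, which occurs with probability $p_e x_e$; conditioned on being probed, it becomes part of the matching (and contributes weight $w_e$) with probability $p_e$, by independence of the edge realization from $\vec{Y}$. Hence the expected contribution of $e$ equals $w_e \cdot p_e \cdot (p_e x_e) = w_e p_e^2 x_e$. Summing over $e \in E$ by linearity of expectation yields
\[
\ex{\ALG_2} \;=\; \sum_{e \in E} w_e p_e^2 x_e,
\]
which is the claim. There is no real technical obstacle here; the only point to be careful about is the justification that the GKPS output is a genuine matching, which reduces the probing stage to independent Bernoulli trials with no need to reason about orderings or conflicts.
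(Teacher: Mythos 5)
Your proof is correct and follows essentially the same route as the paper: show $\vec{y}=(p_e x_e)_e$ is a feasible fractional matching, invoke the marginal-distribution and degree-preservation properties of GKPS to conclude $\{e:Y_e=1\}$ is a genuine matching, and then compute each edge's contribution as $w_e\cdot p_e\cdot p_e x_e$ before summing by linearity of expectation. The only (harmless) addition you make is explicitly noting that the timeout constraints are respected because each vertex is probed at most once and $t_v\geq 1$, a point the paper leaves implicit.
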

		\begin{proof}
Recall that $\vec{y} := \vec{p} \cdot \vec{x}$ is a feasible solution to (LP-MATCH): so, the use of the GKPS dependent-rounding procedure on the polytope from (LP-MATCH) is allowed. First, from property (P1) of $\GKPS$, the probability that an edge $e \in E$ has $Y_e = 1$ is $y_e = p_e x_e$. Second, from property (P2) and the fact that $\sum_{e \in \Neigh{v}} y_e = \sum_{e \in \Neigh{v}} p_e x_e \leq 1$, we have that the subgraph induced by the edges with $Y_e = 1$ has at most one edge incident to any vertex $v \in V$. Thus, given that an edge $e$ has $Y_e = 1$, it is guaranteed to be probed and the probability that it is eventually chosen into the matching is its probability of existing $p_e$. Putting these two facts together, the probability that any edge $e \in E$ is included in the final matching is $p^2_e x_e$. Using the linearity of expectation, we obtain that the total expected weight of the matching is $\sum_{e \in E} w_e p^2_e x_e$.
		\end{proof}		
We now have all the ingredients to prove Theorem~\ref{thr:newalg}.
	
	\begin{proof}[of Theorem~\ref{thr:newalg}] Algorithm $\MAIN$ chooses either $\ALG_1$ or $\ALG_2$ depending on the maximum of the lowerbounds  of $\ALG_1$ and $\ALG_2$. Hence to find its worst case behaviour we have to characterize an instance that minimizes 
\[
\max \left\{\sum_{e \in E} w_e p_e x_e \cdot g(p_e), \sum_{e \in E} w_e p^2_e x_e \right\}.
\] 
Let $E(q)$ denote the set of edges such that $p_e = q$. Let $h_q := \sum_{e \in E(q)} w_e x_e$.  Notice that the quantity $\int_{q=0}^1   q h_q \mbox{d}q$ is just the value of (LP-BIP), which is our upperbound on $OPT$.
The outcome of $\ALG_1$ is thus $\int_{q=0}^1 g(q) \cdot  q h_q \mbox{d}q$. And the value of (LP-MATCH) induced by $\vec{y}$ is $\int_{q=0}^1   q^2 h_q \mbox{d}q$, which is at the same time the outcome of $\ALG_2$.
Thus, the adversary wants to minimize the following mathematical program.
		
		\begin{equation}
				\label{lp:adversary}
				\begin{array}{ll@{}ll}
					\text{minimize} \qquad & \displaystyle \max\left \{ \int_{q=0}^1 g(q) \cdot  q h_q \mbox{d}q, \int_{q=0}^1 q^2 h_q \mbox{ d}q \right \} & \quad \text{such that}\\
					& \displaystyle  \int_{q=0}^1 q h_q \mbox{ d}q = 1 & 
			\end{array}
		\end{equation}
		
		The normalization constraint $\int_{q=0}^1 q h_q \mbox{ d}q = 1$ ensures that the optimal value to the adversarial program \eqref{lp:adversary} is the approximation ratio for the algorithm $\MAIN$. 

Such a mathematical program may be hard to solve in full generality, due to the fact that the variable over which we optimize is in fact a (not necessarily continuous) probability distribution. However, in our case we can characterize the optimal solution, i.e., function $q h_q$ such that $\int_{q=0}^1 q h_q \mbox{ d}q = 1$, algebraically.
Let us next formulate our problem in a more compact way where we define $f := qh_q$:
		\begin{equation}\label{lp:adversary2}
					\min_f  \left \{ \max\br{ \int_{q=0}^1 g(q) \cdot  f(q) \mbox{d}q, \int_{q=0}^1 q \cdot f(q) \mbox{d}q }  \ \text{s.t.} 					 \int_{q=0}^1 f(q) \mbox{d}q = 1 \right \}.
		\end{equation}

Since $g$ is a concave function, it is point-wise at least as large as a linear function $\br{1-q}g(0) + q\cdot g\br{1}$ for all $q \in \br{0,1}$. Hence, the minimum of~\eqref{lp:adversary2} is at least as large as the minimum of the following program: 
		\begin{align}
					\min_f   &\quad \max\br{ \int_{q=0}^1 \br{\br{1-q}g(0) + q\cdot g\br{1}} \cdot  f(q) \mbox{d}q, \int_{q=0}^1 q \cdot f(q) \mbox{d}q }  \label{lp:adversary3}\\
 \text{s.t.} 	&\quad				 \int_{q=0}^1 f(q) \mbox{d}q = 1 .\nonumber
		\end{align}
Here we have two linear functions, i.e., $\br{1-q}g(0) + q\cdot g\br{1}$ and $q$. This allows us to simplify it further:
\begin{align*}
\int_{q=0}^1 \br{\br{1-q}g(0) + q\cdot g\br{1}} \cdot  f(q) \mbox{d}q 
=  g\br{0} + \br{g\br{1}-g\br{0}} \int_{q=0}^1 q\cdot f\br{q} \mbox{d}q.
\end{align*}

Even though the variable $f$ in program~\eqref{lp:adversary3} is a density function, we can consider the whole integral $\int_{q=0}^1 q\cdot f\br{q} \mbox{d}q$ as a single real variable from $\brq{0,1}$, and the program simplifies to
		\begin{align*}
					\min_{\alpha}   &\quad \max\br{ g\br{0} + \br{g\br{1}- g\br{0}}\cdot \alpha, \alpha }  \\
 \text{s.t.} 	&\quad				\alpha \in \brq{0,1} .\nonumber
		\end{align*}

Since function $\alpha$ is increasing and function $g\br{0} + \br{g\br{1}- g\br{0}}\cdot \alpha$ is decreasing, the minimum is obtained for $\alpha$ for which $g\br{0} + \br{g\br{1}- g\br{0}}\cdot \alpha = \alpha$. This yields a value of $\alpha$ such that 
\[
\alpha = \frac{g\br{0}}{1+g\br{0}-g\br{1}} = \frac{\frac{1}{2}\br{1-e^{-2}}}{1+\frac{1}{2}\br{1-e^{-2}}-\frac{1}{3}} = \frac{3e^2-3}{7e^2-3} \approx 0.39338739.
\]
Solution $\alpha$ which is a number has to be translated into solution $f$ of program~\eqref{lp:adversary3} which is a density function. This is however straightforward: $f$ is a density function which places $\alpha$ mass on point 1, and $1-\alpha$ mass on point 0. At the same time $f$ is a solution to the initial program~\eqref{lp:adversary2}. And since the value of program~\eqref{lp:adversary2} for such an $f$ is also equal $\frac{3e^2-3}{7e^2-3}$ we conclude that it is the actual minimal value of it.

		\end{proof}


\section{Stochastic Matching in General Graphs}
\label{sec:arbitrarygraphs}

In this section, we present our improved approximation algorithm for Stochastic Matching in general graphs as stated in Theorem~\ref{thr:mainOfflineGeneral}.



We consider the linear program LP-GEN which is
obtained from LP-BIP by adding the following \emph{blossom inequalities}:
\begin{align}
 & \sum_{e\in E(W)}p_{e}x_{e}\leq\frac{|W|-1}{2} & \forall W\subseteq V,|W|\text{ odd}.\label{blossomConstraints}
\end{align}
Here $E(W)$ is the subset of edges with both endpoints in $W$. We remark that, using standard tools from matching theory, we can solve LP-GEN in polynomial time despite its exponential number of constraints~\cite{Schrijver:book}. Also, in this case, $x_{e}=\pr{\OPT\mbox{ probes }e}$ is a feasible
solution of LP-GEN, hence the analogue of Lemma \ref{lem:Bansal} still holds.

Our stochastic-matching algorithm for the case of a general graph $G=\br{V,E}$
works via a reduction to the bipartite case. First we solve LP-GEN;
let $x=\br{x_{e}}_{e\in E}$ be the optimal fractional solution. Second we randomly split
the nodes $V$ into two sets $A$ and $B$, with $E_{AB}$ being
the set of edges between them. On the bipartite graph $\br{A\cup B,E_{AB}}$
we apply the algorithm for the bipartite case, but using the fractional
solution $\br{x_{e}}_{e\in E_{AB}}$ induced by LP-GEN rather than
solving LP-BIP. Note that $\br{x_{e}}_{e\in E_{AB}}$ is a feasible
solution to LP-BIP for the bipartite graph $\br{A\cup B,E_{AB}}$.

The analysis differs only in two points w.r.t.~the one for the bipartite
case. First, with $\h E_{AB}$ being the subset of edges of $E_{AB}$
that were rounded to 1, we have now that $\pr{e\in\h E_{AB}}=\pr{e\in E_{AB}}\cdot\prcond{e\in\h E_{AB}}{e\in E_{AB}}=\frac{1}{2}x_{e}$.
Second, but for the same reason, using again the negative correlation and marginal distribution properties, we have
\begin{align*}
 & \excondls{}{\sum_{f\in\h{\delta}\br e}p_{f}}{e\in\h E_{AB}}\leq \sum_{f\in\delta\br e}p_{f}\pr{f\in\h E_{AB}}=\sum_{f\in\delta\br e}\frac{p_{f}x_{f}}{2}\leq\frac{2-2p_{e}x_{e}}{2}\leq1.
\end{align*}
Repeating the steps of the proof of Lemma~\ref{lem:safe-bip} and
including the above inequality we get the following. \begin{lemma}
\label{lem:safe-gen} For every edge $e$ it holds that $\prcond{e\mbox{ is safe}}{e\in\hat{E}_{AB}}\geq h\br{p_{e}}$,
where $$h\br{p}:=\frac{1}{1+p}\br{1-\exp\br{-\br{1+p}\frac{1}{p}\ln\frac{1}{1-p}}}.$$
\end{lemma}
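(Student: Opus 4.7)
My plan is to follow verbatim the structure of the proof of Lemma~\ref{lem:safe-bip}, with the only substantive change being that the conditional expectation of $\sum_{f \in \hat\delta(e)} p_f$ is now bounded by $1$ (instead of $2$), and this improved bound is precisely what is already derived in the paragraph preceding the lemma statement.

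First I would write, exactly as in the bipartite case, the lower bound
\[
\prcond{e \text{ is safe}}{e \in \hat E_{AB}} \geq \excondls{\hat E_{AB}\setminus e,\, Y}{\prod_{f \in \hat\delta(e):\, Y_f < Y_e} (1-p_f)}{e \in \hat E_{AB}}.
\]
Then I would take expectation over $Y$ only, exploit independence of the $Y_f$'s, and integrate over $Y_e \in [0,\tfrac{1}{p_e}\ln\tfrac{1}{1-p_e}]$ against its density $e^{-p_e y}$. The step in which one upper-bounds $\Pr[Y_f \leq y]$ by $\tfrac{1}{p_f}(1 - e^{-p_f y})$ (valid on all of $[0,\infty)$ by monotonicity) is identical to the bipartite case, yielding
\[
1 - p_f \Pr[Y_f \leq y] \geq e^{-p_f y},
\]
and therefore the integrand is at least $e^{-(\sum_{f \in \hat\delta(e)} p_f + p_e)\, y}$. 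Integrating gives the expression
\[
\excondls{\hat E_{AB}\setminus e}{\frac{1}{\sum_{f \in \hat\delta(e)} p_f + p_e} \left(1 - \exp\!\left(-\left(\sum_{f \in \hat\delta(e)} p_f + p_e\right)\!\tfrac{1}{p_e}\ln\tfrac{1}{1-p_e}\right)\right)}{e \in \hat E_{AB}}.
\]

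Next, I would apply Jensen's inequality to the convex decreasing function $\phi(x) := \frac{1}{x+p_e}\bigl(1 - e^{-(x+p_e)\frac{1}{p_e}\ln\frac{1}{1-p_e}}\bigr)$, exactly as before, to push the conditional expectation inside. At this point the key difference from Lemma~\ref{lem:safe-bip} enters: instead of using the LP constraints at both endpoints of $e$ (which gave $\sum_{f \in \delta(e)} p_f x_f \leq 2$), I invoke the computation already spelled out in the text right before the lemma, namely
\[
\excondls{}{\sum_{f \in \hat\delta(e)} p_f}{e \in \hat E_{AB}} \leq \sum_{f \in \delta(e)} p_f \cdot \tfrac{1}{2} x_f \leq \tfrac{2 - 2 p_e x_e}{2} \leq 1.
\]
Since $\phi$ is decreasing, this plug-in gives
\[
\prcond{e \text{ is safe}}{e \in \hat E_{AB}} \geq \phi(1) = \frac{1}{1+p_e}\!\left(1 - \exp\!\left(-(1+p_e)\tfrac{1}{p_e}\ln\tfrac{1}{1-p_e}\right)\right) = h(p_e),
\]
which is the claimed bound.

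The proof is essentially mechanical once the preceding paragraph's two observations (the $\tfrac{1}{2}$ factor on marginals due to the random bipartition and the resulting bound of $1$ on the conditional expectation) are in hand, so I do not expect any real obstacle. The only place to be slightly careful is to confirm that the GKPS negative correlation and marginal distribution properties combine correctly with the random bipartition to yield $\Pr[f \in \hat E_{AB} \mid e \in \hat E_{AB}] \leq \tfrac{1}{2} x_f$ for each $f \in \delta(e)$; this follows because the bipartition is independent of the rounding, and the events $\{f \in E_{AB}\}$ and $\{e \in E_{AB}\}$ together with the GKPS marginals give the stated factor of $\tfrac{1}{2}$. Everything else is a direct transcription of the bipartite argument.
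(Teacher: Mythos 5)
Your proposal is correct and matches the paper's approach exactly: the paper itself only says ``Repeating the steps of the proof of Lemma~\ref{lem:safe-bip} and including the above inequality we get the following,'' where ``the above inequality'' is precisely the bound $\excondls{}{\sum_{f\in\h{\delta}\br e}p_{f}}{e\in\h E_{AB}}\leq 1$ that you invoke. Your rederivation of the integral, the application of Jensen's inequality to the convex decreasing function, and the substitution $\phi(1)=h(p_e)$ are all faithful transcriptions of the bipartite argument with the constant $2$ replaced by $1$, which is exactly what the paper intends.
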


Since $h(p_{e})\geq h(1)=\frac{1}{2}$, we directly obtain a $1/4$-approximation
which matches the result in \cite{BGLMNR12algo}. Similarly to the
bipartite case, we can patch this result with the simple greedy algorithm
(which is exactly the same in the general graph case). For a given
parameter $\delta\in[0,1]$, let us define $\gamma$ analogously to
the bipartite case. Similarly to the proof of Lemma \ref{lem:bipRefined},
one obtains that the above algorithm has approximation factor $\frac{\gamma}{4}+\frac{1-\gamma}{2}h(\delta)$.
Similarly to the proof of Lemma \ref{lem:greedy}, the greedy algorithm
has approximation ratio $\gamma\delta$ (here we exploit the blossom
inequalities that guarantee the integrality of the matching polyhedron).
We can conclude similarly that in the worst case $\gamma=\frac{h\br{\delta}}{2\delta+h\br{\delta}-1/2}$,
yielding an approximation ratio of $\frac{\delta\cdot h\br{\delta}}{2\delta+h\br{\delta}-1/2}$.
Maximizing (numerically) this function over $\delta$ gives, for $\delta=0.5580$,
the $0.269$ approximation ratio claimed in Theorem \ref{thr:mainOfflineGeneral}.



\section{Online Stochastic Matching with Timeouts}
\label{sec:online}

Let $G=\br{A\cup B,A\times B}$ be the input graph, with items $A$
and buyer types $B$. We use the same notation for edge probabilities,
edge profits, and timeouts as in Stochastic Matching. 
Following \cite{BGLMNR12algo}, we can assume w.l.o.g. that each buyer
type is sampled uniformly with probability $1/n$. Consider the following
linear program:

\begin{align*}
\max & \sum_{a\in A,b\in B}w_{ab}p_{ab}x_{ab} & \mbox{(LP-ONL)}\\
\mbox{s.t.} & \sum_{b\in B}p_{ab}x_{ab}\leq1, & \forall a\in A\\
 & \sum_{a\in A}p_{ab}x_{ab}\leq1, & \forall b\in B\\
 & \sum_{a\in A}x_{ab}\leq t_{b}, & \forall b\in B\\
 & 0\leq x_{ab}\leq1, & \forall ab\in E.
\end{align*}
The above LP models a bipartite stochastic-matching instance
where one side of the bipartition contains exactly one buyer per buyer
type. In contrast, in the online case, several buyers of the same buyer
type (or none at all) can arrive, and the optimal strategy can allow
many buyers of the same type to probe edges. 
This is not a problem though, since the following lemma from \cite{BGLMNR12algo}
allows us just to look at the graph of buyer types and not at the
actual realized buyers. \begin{lemma} \label{lem:onlineBansal} (\cite{BGLMNR12algo},
Lemmas 9 and 11) Let $\ex{\OPT}$ be the expected profit of the optimal
online algorithm for the problem. Let $LP_{onl}$ be the optimal value
of \emph{LP-ONL. }It holds that $\ex{\OPT}\leq LP_{onl}$. \end{lemma}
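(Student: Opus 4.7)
The plan is to exhibit an explicit feasible solution $\mathbf{x}^*$ to LP-ONL whose objective value equals the expected profit of the optimal online algorithm OPT. Since $LP_{onl}$ is the maximum LP objective over feasible solutions, this construction will immediately yield $\ex{\OPT}\leq LP_{onl}$.

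For each edge $ab$, I would set
\[
x^*_{ab} := \ex{N_{ab}},
\]
where $N_{ab}$ is the number of buyers of type $b$ arriving over the $n$ rounds that OPT instructs to probe the edge connecting them to item $a$; the expectation is over the random buyer arrivals, the random edge realizations, and any internal randomness in OPT. Under the assumption that each of the $n$ positions samples any fixed type $b$ with probability $1/n$, the expected number of arriving type-$b$ buyers equals exactly $1$, and this will drive all four constraint verifications.

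Verifying feasibility of $\mathbf{x}^*$ is then a matter of standard linearity-of-expectation arguments. A single type-$b$ buyer probes at most $t_b$ edges, so $\sum_{a} x^*_{ab} \leq t_b \cdot \ex{\#\text{type-}b\text{ buyers}} = t_b$. A single buyer is matched at most once, and since each probed edge $ab$ succeeds independently with probability $p_{ab}$, the expected total number of successful probes initiated by type-$b$ buyers equals $\sum_a p_{ab} x^*_{ab}$, which is therefore at most $\ex{\#\text{type-}b\text{ buyers}} = 1$. By the same token, item $a$ is matched at most once, giving $\sum_b p_{ab} x^*_{ab}\leq 1$. Finally, since each type-$b$ buyer probes the specific edge to $a$ at most once, $x^*_{ab}$ is itself bounded above by $\ex{\#\text{type-}b\text{ buyers}} = 1$.

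The objective value of $\mathbf{x}^*$ in LP-ONL is $\sum_{ab} w_{ab} p_{ab} x^*_{ab}$, which by linearity of expectation and the probe-commit independence of edge realizations equals exactly the expected profit of OPT. I do not anticipate any substantive obstacle; the main point to state precisely is that all four LP-ONL constraints --- including the a priori non-obvious box constraint $x^*_{ab}\leq 1$ --- follow from the single fact that $\ex{\#\text{type-}b\text{ buyers}}=1$ under the uniform $1/n$ arrival distribution, together with the matching and patience restrictions that bind any feasible online policy. This is precisely the route taken by Lemmas~9 and~11 of~\cite{BGLMNR12algo}, split into a feasibility claim and a value-equality claim.
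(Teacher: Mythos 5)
Your proof is correct and follows the standard route: set $x^*_{ab}$ to the expected number of probes of edge $ab$ made by the optimal online policy, verify feasibility via linearity of expectation together with the observation that $\ex{\#\text{type-}b\text{ arrivals}}=1$ under the uniform $1/n$ distribution, and note the objective then equals $\ex{\OPT}$ because each probe succeeds with probability $p_{ab}$ independently of the decision to probe. The paper itself does not prove this lemma --- it cites Lemmas~9 and~11 of Bansal et al.~\cite{BGLMNR12algo} --- and your argument is exactly the one used there, so there is nothing further to compare.
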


\global\long\def\Ab{A_{b}}

We will devise an algorithm whose expected outcome is at least $0.245 \cdot LP_{onl}$,
and then Theorem \ref{thr:mainOnline} follows from Lemma \ref{lem:onlineBansal}.

\paragraph{The algorithm.}

We initially solve LP-ONL and let $\br{x_{ab}}_{ab\in A\times B}$
be the optimal fractional solution. Then buyers arrive. When a buyer
of type $b$ is sampled, then: (a) if a buyer of the same type $b$
was already sampled before we simply discard her, do nothing, and
wait for another buyer to arrive, and (b) if it is the first buyer of type
$b$, then we execute the following \emph{subroutine for buyers.}
Since we take action only when the first buyer of type $b$ comes,
we shall denote such a buyer simply by $b$, as it will not cause
any confusion.

\paragraph{Subroutine for buyers.}

Let us consider the step of the online algorithm in which the first
buyer of type $b$ arrived, if any. Let $\Ab$ be the items that are still
available when $b$ arrives. Our subroutine will probe a subset of
at most $t_{b}$ edges $ab$, $a\in\Ab$. Consider the vector $\br{x_{ab}}_{a\in\Ab}$.
Observe that it satisfies the constraints $\sum_{a\in\Ab}p_{ab}x_{ab}\leq1$
and $\sum_{a\in\Ab}x_{ab}\leq t_{b}$. Again using GKPS, we round
this vector in order to get $\br{\hat{x}_{ab}}_{a\in\Ab}$ with $\hat{x}_{ab}\in\{0,1\}$,
and satisfying the marginal distribution, degree preservation, and
negative correlation properties\footnote{In this case, we have a bipartite graph where one side has
only one vertex, and here GKPS reduces to Srinivasan's
rounding procedure for level-sets \cite{DBLP:conf/focs/Srinivasan01}.}. 
Let $\hat{A}_{b}$ be the set of items $a$ such that $\hat{x}_{ab}=1$.
For each $ab$, $a\in\h A_{b}$, we independently draw a random variable
$Y_{ab}$ with distribution: $\pr{Y_{ab}<y}=\frac{1}{p_{ab}}\br{1-\exp\br{-p_{ab}\cdot y}}$
for $y\in\brq{0,\frac{1}{p_{ab}}\ln\frac{1}{1-p_{ab}}}$. Let $Y=\br{Y_{ab}}_{a\in\h A_{b}}$.

Next we consider items of $\hat{A}_{b}$ in increasing order of $Y_{ab}$.
Let $\alpha_{ab}\in[\frac{1}{2},1]$ be a \emph{dumping factor} that
we will define later. With probability $\alpha_{ab}$ we probe edge
$ab$ and as usual we stop the process (of probing edges incident
to $b$) if $ab$ is present. Otherwise (with probability $1-\alpha_{ab}$)
we \emph{simulate} the probe of $ab$, meaning that with probability
$p_{ab}$ we stop the process anyway --- like if edge $ab$ were probed
and turned out to be present. Note that we do not get any profit from
the latter simulation since we do not really probe $ab$.

\paragraph{Dumping factors.}

It remains to define the dumping factors. For a given edge $ab$,
let 
\[
\beta_{ab}:=\excondls{\h A{}_{b}\setminus a,Y}{\prod_{a'\in A_{b}:Y_{a'b}<Y_{ab}}\br{1-p_{a'b}}}{a\in\h A_{b}}.
\]
Using the inequality $\sum_{a\in\Ab}p_{ab}x_{ab}\leq1$, by repeating the analysis from Section~\fab{\ref{sec:offline_bip}} we can
show that 
\[
\beta_{ab}\geq h(p_{ab})=\frac{1}{1+p_{ab}}\br{1-\exp\br{-\br{1+p_{ab}}\frac{1}{p_{ab}}\ln\frac{1}{1-p_{ab}}}}\geq\frac{1}{2}.
\]
Let us assume for the sake of simplicity that we are able to compute
$\beta_{ab}$ exactly. 
We set $\alpha_{ab}=\frac{1}{2\beta_{ab}}$. Note
that $\alpha_{ab}$ is well defined since $\beta_{ab}\in[1/2,1]$.

\paragraph{Analysis.}

Let us denote by ${\cal A}_{b}$ the event that at least one buyer
of type $b$ arrives. The probability that an edge $ab$ is probed
can be expressed as: 
\[
\pr{{\cal A}_{b}}\cdot\prcond{\mbox{no }b'\mbox{ takes }a\mbox{ before }b}{{\cal A}_{b}}\cdot\prcond{b\mbox{ probes }a}{{\cal A}_{b}\wedge a\mbox{ is not yet taken}}.
\]
The probability that $b$ arrives is $\pr{{\cal A}_{b}}=1-\br{1-\frac{1}{n}}^{n}\geq1-\frac{1}{e}$.
We shall show first that $$\prcond{b\mbox{ probes }a}{{\cal A}_{b}\wedge a\mbox{ is not yet taken}}$$
is exactly $\frac{1}{2}x_{ab}$, and later we shall show that $\prcond{\mbox{no }b'\mbox{ takes }a\mbox{ before }b}{{\cal A}_{b}}$
is at least $\frac{1}{1+\frac{1}{2}\br{1-\frac{1}{e}}}$. This will
yield that the probability that $ab$ is probed is at least 
\[
\br{1-\frac{1}{e}}\frac{1}{1+\frac{1}{2}\br{1-\frac{1}{e}}}\cdot\frac{1}{2}x_{ab}=\frac{e-1}{3e-1}x_{ab}>0.24 x_{ab}.
\]

Consider the probability that some edge $a'b$ appearing before $ab$
in the random order \emph{blocks} edge $ab$, meaning that $ab$ is
not probed because of $a'b$. Observe that each such $a'b$ is indeed
considered for probing in the online model, and the probability that
$a'b$ blocks $ab$ is therefore $\alpha_{a'b}p_{a'b}+(1-\alpha_{a'b})p_{a'b}=p_{a'b}$.
We can conclude that the probability that $ab$ is not blocked is
exactly $\beta_{ab}$.

Due to the dumping factor $\alpha_{ab}$, the probability that we
actually probe edge $ab\in\hat{A}_{b}$ is exactly $\alpha_{ab}\cdot\beta_{ab}=\frac{1}{2}$.
Recall that $\pr{a\in\hat{A}_{b}}=x_{ab}$ by the marginal distribution
property. Altogether 
\begin{equation}
\prcond{b\mbox{ probes }a}{{\cal A}_{b}\wedge a\mbox{ is not yet taken}}=\frac{1}{2}x_{ab}.\label{eq:afterdumping}
\end{equation}

Next let us condition on the event that buyer $b$ arrived and lower-bound the probability that $ab$ is not blocked on the $a$'s side in
such a step, i.e., that no other buyer has taken $a$ already. The
buyers, who are first occurrences of their type, arrive uniformly
at random. Therefore, we can analyze the process of their arrivals
as if it was constructed by the following procedure: every buyer $b'$
is given an independent random variable $Y_{b'}$ distributed exponentially
on $[0,\infty]$, i.e., $\pr{Y_{b'}<y}=1-e^{y}$; buyers arrive in
increasing order of their variables $Y_{b'}$. Once buyer $b'$ arrives,
it probes edge $ab'$ with probability (exactly) $\alpha_{ab'}\beta_{ab'}x_{ab'}=\frac{1}{2}x_{ab'}$
--- these probabilities are independent among different buyers. Thus,
conditioning on the fact that $b$ arrives, we obtain the following
expression for the probability that $a$ is safe at the moment when
$b$ arrives: 
\begin{eqnarray*}
 &  & \prcond{\mbox{no }b'\mbox{ takes }a\mbox{ before }b}{{\cal A}_{b}}\\
 & \geq & \excondls{}{\prod_{b'\in B\setminus b:Y_{b'}<Y_{b}}\br{1-\prcond{{\cal A}_{b'}}{{\cal A}_{b}}\prcond{b'\mbox{ probes }ab'}{{\cal A}_{b'}}p_{ab'}}}{{\cal A}_{b}}\\
 & = & \int_{0}^{\infty}\prod_{b'\in B\setminus b}\br{1-\prcond{{\cal A}_{b'}}{{\cal A}_{b}}\cdot\prcond{Y_{b'}<y}{{\cal A}_{b'}}\cdot\prcond{b'\mbox{ probes }ab'}{{\cal A}_{b'}}p_{ab'}}e^{-y}\mbox{d}y.
\end{eqnarray*}
Now let us upper-bound each of the probability factors in the above
product. First of all $\prcond{{\cal A}_{b'}}{{\cal A}_{b}}=1-\br{1-\frac{1}{n}}^{n-1}\leq1-\frac{1}{e}$.
Second, $\prcond{Y_{b'}<y}{{\cal A}_{b'}}=1-e^{-y}$ just by definition\footnote{The ${\cal A}_{b'}$ event in the condition simply indicates that
$Y_{b'}$ was drawn.}. Third, from~\eqref{eq:afterdumping} we have that $\prcond{b'\mbox{ probes }ab'}{{\cal A}_{b'}}=\frac{x_{ab}}{2}.$

Thus the above integral can be lower-bounded by 
\begin{eqnarray*}
 &  & \int_{0}^{\infty}\prod_{b'\in B\setminus b}\br{1-\br{1-\frac{1}{e}}\br{1-e^{-y}}\cdot \frac{1}{2}x_{ab'}\cdot p_{ab'}}e^{-y}\mbox{d}y\\
 & \geq & \int_{0}^{\infty}\prod_{b'\in B\setminus b}\exp\br{-\br{1-\frac{1}{e}}\frac{1}{2}x_{ab'}\cdot p_{ab'}\cdot y}e^{-y}\mbox{d}y\\
 & = & \frac{1}{1+\br{1-\frac{1}{e}}\frac{1}{2}\br{\sum_{b'\in B\setminus b}p_{ab'}\cdot x_{ab'}}} \\
 & \geq & \frac{1}{1+\frac{1}{2}\br{1-\frac{1}{e}}} \\
 & = & \frac{2e}{3e-1}.
\end{eqnarray*}
In the first inequality above, we used the fact that $1-c(1-e^{-y})\geq e^{-cy}$
for $c\in[0,1]$ and any $y\in\mathbb{R}$: here $c=\br{1-\frac{1}{e}}\frac{1}{2}x_{ab'}\cdot p_{ab'}$.
In the first equality we used $\int_{0}^{\infty}e^{-ax}\mbox{d}x=\frac{1}{a}$.
In the last inequality we used the LP constraint $\sum_{b'\in B\setminus b}p_{ab'}\cdot x_{ab'}\leq1$.

Altogether, as anticipated earlier, 
\[
\pr{ab\mbox{ is probed}}\geq\br{1-\frac{1}{e}}\frac{x_{ab}}{2}\cdot\frac{2e}{3e-1}=x_{ab}\cdot\frac{e-1}{3e-1}>0.24\cdot x_{ab}.
\]

\paragraph{Technical details.} Recall that we assumed that we are able to compute the quantities $\beta_{ab}$, hence the desired dumping factors $\alpha_{ab}$. Indeed, for our goals it is sufficient to estimate them with large enough probability and with sufficiently good accuracy. This can be done by simulating the underlying random process a polynomial number of times. This way the above probability can be lower bounded by $(\frac{e-1}{3e-1}+\eps) x_e$ for an arbitrarily small constant $\eps>0$. In particular, by choosing
a small enough $\eps$ the factor $0.245$ is still guaranteed. The details are given in Appendix~\ref{sec:computeDumping}.

The approximation factor can be further improved to $0.245$ via the technique based on small and big probabilities that we introduced before. This is discussed in the next section. Theorem \ref{thr:mainOnline} follows.



\subsection{Combination with Greedy in the Online Case}
\label{sec:onlineBigSmall}


Recall that $h(p)=\frac{1}{1+p}\br{1-\exp\br{-\br{1+p}\frac{1}{p}\ln\frac{1}{1-p}}}$.
We are again applying the big/small probabilities trick, so let
$\delta\in(0,1)$ be a parameter to be fixed later. Consider again the subroutine for buyers.
We previously used dumping factors $\alpha_{ab}=\frac{1}{2\beta_{ab}}$,
where we had set $\beta_{ab}\geq h\br{p_{ab}}$.

This time we define $\alpha_{ab}=\frac{1}{\beta_{ab}}h\br{\delta}$
for $ab$ such that $p_{ab}\leq\delta$, and $\alpha_{ab}=\frac{1}{\beta_{ab}}\frac{1}{2}$
otherwise. We again assume here that we can calculate $\beta_{ab}$
(see Section~\ref{sec:computeDumping}). Define $E_{large}=\setst{ab\in E}{p_{ab}\geq\delta}$
and $E_{small}=E\setminus E_{large}$, and let $LP_{large}=\gamma\cdot LP_{onl}$.
Therefore, for edge $ab$ the probability that $ab$ is probed when
$b$ scans items is exactly $h\br{\delta}$ for $ab\in E_{small}$
and $\frac{1}{2}$ for $ab\in E_{large}.$ Now by repeating the steps
in the proof of Section \ref{sec:online}, we obtain that the probability that $ab$ is
not blocked on $a$'s side is at least 
\begin{align*}
\frac{1}{1+\br{1-\frac{1}{e}}\br{\sum_{b'\in B\setminus b}p_{ab'}\cdot\alpha_{ab'}\beta_{ab'}\cdot x_{ab'}}}\geq & \frac{1}{1+\br{1-\frac{1}{e}}h\br{\delta}\br{\sum_{b'\in B\setminus b}p_{ab'}\cdot x_{ab'}}}\\
\geq & \frac{1}{1+\br{1-\frac{1}{e}}h\br{\delta}},
\end{align*}
since $\alpha_{ab'}\cdot\beta_{ab'}=h\br{\delta}$ for small
edges and $\alpha_{ab'}\cdot\beta_{ab'}=\frac{1}{2}\leq h\br{\delta}$
for large edges. Therefore, the approximation ratio of such an algorithm
is at least 
\begin{multline*}
\br{1-\frac{1}{e}}\br{\gamma\frac{1/2}{1+h\br{\delta}\br{1-\frac{1}{e}}}+\br{1-\gamma}\frac{h(\delta)}{1+h\br{\delta}\br{1-\frac{1}{e}}}}\\
=\br{1-\frac{1}{e}}\frac{1}{1+h\br{\delta}\br{1-\frac{1}{e}}}\br{\gamma\frac{1}{2}+\br{1-\gamma}h\br{\delta}}.
\end{multline*}

An alternative algorithm simply computes a maximum weight matching
w.r.t. weights $p_{e}w_{e}$ in the graph corresponding to LP-ONL,
and upon arrival of the first copy of a buyer type $b$ probes only the edge incident to
$b$ in the matching (if any). By the same argument as in the offline
case, this matching has weight at least $\gamma\cdot\delta\cdot LP_{onl}$,
and every buyer type is sampled with probability at least $1-\frac{1}{e}$.
So, the approximation ratio of the greedy algorithm is at least $\br{1-\frac{1}{e}}\gamma\delta$.


For a fixed $\delta$, depending on the value of $\gamma$ (that
we can compute offline) we can run the algorithm with best approximation
ratio according to the above analysis. Thus the overall approximation
ratio is 
\[
(1-\frac{1}{e})\max\left\{ \frac{1}{1+h\br{\delta}\br{1-\frac{1}{e}}}\br{\gamma\frac{1}{2}+\br{1-\gamma}h\br{\delta}},\gamma\cdot\delta\right\} .
\]
As in Section~\ref{sub:Bipartite-graphs} the worst-case is obtained when the two quantities are equal. This yields
\[
\gamma = \frac{h\br{\delta}}{\delta \cdot \br{1+h\br{\delta}\br{1-\frac{1}{e}}} -\frac{1}{2} +h\br{\delta}}.
\]
Hence the actual approximation ratio is
\[
\br{1-\frac{1}{e}}\frac{\delta \cdot h\br{\delta}}{\delta \cdot \br{1+h\br{\delta}\br{1-\frac{1}{e}}} -\frac{1}{2} +h\br{\delta}},
\] 
and now we just need to optimize over $\delta$.
When we set $\delta=0.74$, the approximation ratio becomes approximately 
$0.245712219628$ -- which allows us to claim the bound of $0.245$ from Theorem~\ref{thr:mainOnline}.

\bibliographystyle{acm}
\bibliography{StochasticMatching.bib,refs.bib}

\begin{thebibliography}{10}

\bibitem{abdulkadirouglu2005new}
{\sc Abdulkadiro{\u{g}}lu, A., Pathak, P.~A., and Roth, A.~E.}
\newblock The new york city high school match.
\newblock {\em American Economic Review 95}, 2 (2005), 364--367.

\bibitem{AdamczykuwGreedy}
{\sc Adamczyk, M.}
\newblock Improved analysis of the greedy algorithm for stochastic matching.
\newblock {\em Information Processing Letters 111}, 15 (2011), 731 -- 737.

\bibitem{AGM15}
{\sc Adamczyk, M., Grandoni, F., and Mukherjee, J.}
\newblock Improved approximation algorithms for stochastic matching.
\newblock In {\em Algorithms-ESA 2015}. Springer, 2015, pp.~1--12.

\bibitem{DBLP:conf/stacs/AdamczykSW14}
{\sc Adamczyk, M., Sviridenko, M., and Ward, J.}
\newblock Submodular stochastic probing on matroids.
\newblock In {\em {STACS} 2014}, pp.~29--40.

\bibitem{assadi2016stochastic}
{\sc Assadi, S., Khanna, S., and Li, Y.}
\newblock The stochastic matching problem with (very) few queries.
\newblock In {\em Proceedings of the 2016 ACM Conference on Economics and
  Computation\/} (2016), ACM, pp.~43--60.

\bibitem{assadi2017stochastic}
{\sc Assadi, S., Khanna, S., and Li, Y.}
\newblock The stochastic matching problem: Beating half with a non-adaptive
  algorithm.
\newblock In {\em Proceedings of the 2017 ACM Conference on Economics and
  Computation\/} (2017), ACM, pp.~99--116.

\bibitem{badanidiyuru2018bandits}
{\sc Badanidiyuru, A., Kleinberg, R., and Slivkins, A.}
\newblock Bandits with knapsacks.
\newblock {\em Journal of the ACM (JACM) 65}, 3 (2018), 13.

\bibitem{BGLMNR12algo}
{\sc Bansal, N., Gupta, A., Li, J., Mestre, J., Nagarajan, V., and Rudra, A.}
\newblock When {L}{P} is the cure for your matching woes: Improved bounds for
  stochastic matchings.
\newblock {\em Algorithmica 63}, 4 (2012), 733--762.

\bibitem{BKNS12}
{\sc Bansal, N., Korula, N., Nagarajan, V., and Srinivasan, A.}
\newblock Solving packing integer programs via randomized rounding with
  alterations.
\newblock {\em Theory Comput. 8\/} (2012), 533--565.

\bibitem{BCNSX15}
{\sc Baveja, A., Chavan, A., Nikiforov, A., Srinivasan, A., and Xu, P.}
\newblock Improved bounds in stochastic matching and optimization.
\newblock In {\em Approximation, randomization, and combinatorial optimization.
  {A}lgorithms and techniques}, vol.~40 of {\em LIPIcs. Leibniz Int. Proc.
  Inform.} Schloss Dagstuhl. Leibniz-Zent. Inform., Wadern, 2015, pp.~124--134.

\bibitem{behnezhad2018almost}
{\sc Behnezhad, S., and Reyhani, N.}
\newblock Almost optimal stochastic weighted matching with few queries.
\newblock In {\em Proceedings of the 2018 ACM Conference on Economics and
  Computation\/} (2018), ACM, pp.~235--249.

\bibitem{BDHPSS15}
{\sc Blum, A., Dickerson, J.~P., Haghtalab, N., Procaccia, A.~D., Sandholm, T.,
  and Sharma, A.}
\newblock Ignorance is almost bliss: Near-optimal stochastic matching with few
  queries.
\newblock In {\em Proceedings of the Sixteenth ACM Conference on Economics and
  Computation\/} (2015), ACM, pp.~325--342.

\bibitem{BGS19}
{\sc Brubach, B., Grammel, N., and Srinivasan, A.}
\newblock Vertex-weighted online stochastic matching with patience constraints.
\newblock {\em CoRR\/} (2019).

\bibitem{BSSX16}
{\sc Brubach, B., Sankararaman, K.~A., Srinivasan, A., and Xu, P.}
\newblock New algorithms, better bounds, and a novel model for online
  stochastic matching.
\newblock In {\em 24th Annual European Symposium on Algorithms (ESA 2016)\/}
  (2016), Schloss Dagstuhl-Leibniz-Zentrum fuer Informatik.

\bibitem{BSSX17}
{\sc Brubach, B., Sankararaman, K.~A., Srinivasan, A., and Xu, P.}
\newblock Attenuate locally, win globally: An attenuation-based framework for
  online stochastic matching with timeouts.
\newblock In {\em Sixteenth International Conference on Autonomous Agents and
  Multiagent Systems (AAMAS 2017)}. 2017.

\bibitem{BSSX18}
{\sc Brubach, B., Sankararaman, K.~A., Srinivasan, A., and Xu, P.}
\newblock Algorithms to approximate column-sparse packing problems.
\newblock In {\em Proceedings of the Twenty-Ninth Annual ACM-SIAM Symposium on
  Discrete Algorithms\/} (2018), Society for Industrial and Applied
  Mathematics, pp.~311--330.

\bibitem{buchbinder2009design}
{\sc Buchbinder, N., Naor, J.~S., et~al.}
\newblock The design of competitive online algorithms via a primal--dual
  approach.
\newblock {\em Foundations and Trends{\textregistered} in Theoretical Computer
  Science 3}, 2--3 (2009), 93--263.

\bibitem{caetano2009learning}
{\sc Caetano, T.~S., McAuley, J.~J., Cheng, L., Le, Q.~V., and Smola, A.~J.}
\newblock Learning graph matching.
\newblock {\em IEEE transactions on pattern analysis and machine intelligence
  31}, 6 (2009), 1048--1058.

\bibitem{CIKMR09}
{\sc Chen, N., Immorlica, N., Karlin, A.~R., Mahdian, M., and Rudra, A.}
\newblock Approximating matches made in heaven.
\newblock In {\em ICALP 2009}, pp.~266--278.

\bibitem{costello2012stochastic}
{\sc Costello, K.~P., Tetali, P., and Tripathi, P.}
\newblock Stochastic matching with commitment.
\newblock In {\em International Colloquium on Automata, Languages, and
  Programming\/} (2012), Springer, pp.~822--833.

\bibitem{DGV05}
{\sc Dean, B.~C., Goemans, M.~X., and Vondr\'ak, J.}
\newblock Adaptivity and approximation for stochastic packing problems.
\newblock In {\em Proceedings of the {S}ixteenth {A}nnual {ACM}-{SIAM}
  {S}ymposium on {D}iscrete {A}lgorithms\/} (2005), ACM, New York,
  pp.~395--404.

\bibitem{DGV08}
{\sc Dean, B.~C., Goemans, M.~X., and Vondr\'ak, J.}
\newblock Approximating the stochastic knapsack problem: the benefit of
  adaptivity.
\newblock {\em Math. Oper. Res. 33}, 4 (2008), 945--964.

\bibitem{devanur2019near}
{\sc Devanur, N.~R., Jain, K., Sivan, B., and Wilkens, C.~A.}
\newblock Near optimal online algorithms and fast approximation algorithms for
  resource allocation problems.
\newblock {\em Journal of the ACM (JACM) 66}, 1 (2019), 7.

\bibitem{dickerson2017multi}
{\sc Dickerson, J.~P., and Sandholm, T.}
\newblock Multi-organ exchange.
\newblock {\em Journal of Artificial Intelligence Research 60\/} (2017),
  639--679.

\bibitem{DSSX18}
{\sc Dickerson, J.~P., Sankararaman, K.~A., Srinivasan, A., and Xu, P.}
\newblock Assigning tasks to workers based on historical data: Online task
  assignment with two-sided arrivals.
\newblock In {\em Proceedings of the 17th International Conference on
  Autonomous Agents and MultiAgent Systems\/} (2018), AAMAS '18, pp.~318--326.

\bibitem{edmonds1965paths}
{\sc Edmonds, J.}
\newblock Paths, trees, and flowers.
\newblock {\em Canadian Journal of mathematics 17\/} (1965), 449--467.

\bibitem{fata2019multi}
{\sc Fata, E., Ma, W., and Simchi-Levi, D.}
\newblock Multi-stage and multi-customer assortment optimization with inventory
  constraints.
\newblock {\em Available at SSRN 3443109\/} (2019).

\bibitem{feldman2010online}
{\sc Feldman, J., Henzinger, M., Korula, N., Mirrokni, V.~S., and Stein, C.}
\newblock Online stochastic packing applied to display ad allocation.
\newblock In {\em European Symposium on Algorithms\/} (2010), Springer,
  pp.~182--194.

\bibitem{FMMM09}
{\sc Feldman, J., Mehta, A., Mirrokni, V., and Muthukrishnan, S.}
\newblock Online stochastic matching: Beating 1-1/e.
\newblock In {\em 2009 50th Annual IEEE Symposium on Foundations of Computer
  Science\/} (2009), IEEE, pp.~117--126.

\bibitem{GKS19}
{\sc Gamlath, B., Kale, S., and Svensson, O.}
\newblock Beating greedy for stochastic bipartite matching.
\newblock In {\em Proceedings of the Thirtieth Annual ACM-SIAM Symposium on
  Discrete Algorithms\/} (2019), SODA '19, pp.~2841--2854.

\bibitem{GKPS06}
{\sc Gandhi, R., Khuller, S., Parthasarathy, S., and Srinivasan, A.}
\newblock Dependent rounding and its applications to approximation algorithms.
\newblock {\em Journal of the ACM 53}, 3 (2006), 324--360.

\bibitem{guha2007approximation}
{\sc Guha, S., and Munagala, K.}
\newblock Approximation algorithms for partial-information based stochastic
  control with markovian rewards.
\newblock In {\em 48th Annual IEEE Symposium on Foundations of Computer Science
  (FOCS'07)\/} (2007), IEEE, pp.~483--493.

\bibitem{immorlica2018adversarial}
{\sc Immorlica, N., Sankararaman, K.~A., Schapire, R., and Slivkins, A.}
\newblock Adversarial bandits with knapsacks.
\newblock In {\em Proceedings of the sixtieth annual IEEE foundations on
  computer science (FOCS '19)\/} (2019), IEEE.

\bibitem{jaillet2013online}
{\sc Jaillet, P., and Lu, X.}
\newblock Online stochastic matching: New algorithms with better bounds.
\newblock {\em Mathematics of Operations Research 39}, 3 (2013), 624--646.

\bibitem{KVV90}
{\sc Karp, R.~M., Vazirani, U.~V., and Vazirani, V.~V.}
\newblock An optimal algorithm for on-line bipartite matching.
\newblock In {\em Proceedings of the twenty-second annual ACM symposium on
  Theory of computing\/} (1990), ACM, pp.~352--358.

\bibitem{katriel2007commitment}
{\sc Katriel, I., Kenyon-Mathieu, C., and Upfal, E.}
\newblock Commitment under uncertainty: Two-stage stochastic matching problems.
\newblock In {\em International Colloquium on Automata, Languages, and
  Programming\/} (2007), Springer, pp.~171--182.

\bibitem{kuhn1955hungarian}
{\sc Kuhn, H.~W.}
\newblock The hungarian method for the assignment problem.
\newblock {\em Naval research logistics quarterly 2}, 1-2 (1955), 83--97.

\bibitem{li2011decision}
{\sc Li, J.}
\newblock {\em Decision making under uncertainty}.
\newblock PhD thesis, University of Maryland, 2011.

\bibitem{Li14private}
{\sc Li, J.}
\newblock Private communication, 2015.

\bibitem{DBLP:conf/soda/Ma14}
{\sc Ma, W.}
\newblock Improvements and generalizations of stochastic knapsack and
  multi-armed bandit approximation algorithms: Extended abstract.
\newblock In {\em {SODA} 2014}, pp.~1154--1163.

\bibitem{manshadi2012online}
{\sc Manshadi, V.~H., Gharan, S.~O., and Saberi, A.}
\newblock Online stochastic matching: Online actions based on offline
  statistics.
\newblock {\em Mathematics of Operations Research 37}, 4 (2012), 559--573.

\bibitem{molinaro2011query}
{\sc Molinaro, M., and Ravi, R.}
\newblock The query-commit problem.
\newblock {\em arXiv preprint arXiv:1110.0990\/} (2011).

\bibitem{M19}
{\sc Mukherjee, J.}
\newblock {\em Approximation Algorithms for Stochastic Matchingsand Independent
  Sets}.
\newblock PhD thesis, The Institute of Mathematical Sciences, Chennai, 2019.

\bibitem{nikolova2006stochastic}
{\sc Nikolova, E., Kelner, J.~A., Brand, M., and Mitzenmacher, M.}
\newblock Stochastic shortest paths via quasi-convex maximization.
\newblock In {\em European Symposium on Algorithms\/} (2006), Springer,
  pp.~552--563.

\bibitem{Dating19}
{\sc Rosenfeld, M., Thomas, R.~J., and Hausen, S.}
\newblock Research note: Disintermediating your friends.

\bibitem{roth2004kidney}
{\sc Roth, A.~E., S{\"o}nmez, T., and {\"U}nver, M.~U.}
\newblock Kidney exchange.
\newblock {\em The Quarterly Journal of Economics 119}, 2 (2004), 457--488.

\bibitem{Schrijver:book}
{\sc Schrijver, A.}
\newblock {\em Combinatorial Optimization - Polyhedra and Efficiency}.
\newblock Springer, 2003.

\bibitem{crowdsourcing14}
{\sc Slivkins, A., and Vaughan, J.~W.}
\newblock Online decision making in crowdsourcing markets: Theoretical
  challenges.
\newblock {\em SIGecom Exch. 12}, 2 (2014), 4--23.

\bibitem{DBLP:conf/focs/Srinivasan01}
{\sc Srinivasan, A.}
\newblock Distributions on level-sets with applications to approximation
  algorithms.
\newblock In {\em {FOCS} 2001}, pp.~588--597.

\bibitem{sun2012efficient}
{\sc Sun, Z., Wang, H., Wang, H., Shao, B., and Li, J.}
\newblock Efficient subgraph matching on billion node graphs.
\newblock {\em Proceedings of the VLDB Endowment 5}, 9 (2012), 788--799.

\bibitem{swamy2006approximation}
{\sc Swamy, C., and Shmoys, D.~B.}
\newblock Approximation algorithms for 2-stage stochastic optimization
  problems.
\newblock {\em ACM SIGACT News 37}, 1 (2006), 33--46.

\bibitem{tian2006saga}
{\sc Tian, Y., Mceachin, R.~C., Santos, C., States, D.~J., and Patel, J.~M.}
\newblock Saga: a subgraph matching tool for biological graphs.
\newblock {\em Bioinformatics 23}, 2 (2006), 232--239.

\bibitem{truong2019prophet}
{\sc Truong, V.-A., and Wang, X.}
\newblock Prophet inequality with correlated arrival probabilities, with
  application to two sided matchings.
\newblock {\em arXiv preprint arXiv:1901.02552\/} (2019).

\bibitem{van2011survey}
{\sc Van~Kaick, O., Zhang, H., Hamarneh, G., and Cohen-Or, D.}
\newblock A survey on shape correspondence.
\newblock In {\em Computer Graphics Forum\/} (2011), vol.~30, Wiley Online
  Library, pp.~1681--1707.

\bibitem{yamaguchi2018stochastic}
{\sc Yamaguchi, Y., and Maehara, T.}
\newblock Stochastic packing integer programs with few queries.
\newblock In {\em Proceedings of the Twenty-Ninth Annual ACM-SIAM Symposium on
  Discrete Algorithms\/} (2018), SIAM, pp.~293--310.

\end{thebibliography}

\appendix
\section{Computing Dumping Factors}\label{sec:computeDumping}

Recall that we assumed the knowledge of quantities $\beta_{ab}$,
which are needed to define the dumping factors $\alpha_{ab}$. Though
we are not able to compute the first quantities exactly in polynomial
time, we can efficiently estimate them and this is sufficient for
our goals. Let us focus on a given edge $ab$. Recall that 
\begin{multline*}
\beta_{ab}:=\excondls{\h A{}_{b}\setminus a,Y}{\prod_{a'\in A_{b}:Y_{a'b}<Y_{ab}}\br{1-p_{a'b}}}{a\in\h A_{b}}\\
\geq\frac{1}{1+p_{ab}}\br{1-\exp\br{-\br{1+p_{ab}}\frac{1}{p_{ab}}\ln\frac{1}{1-p_{ab}}}}=h\br{p_{ab}}.
\end{multline*}

Let us simulate the subroutine for buyers $N$ times without the dumping
factors: in a simulation we run GKPS and sample the $Y$ variables,
but \emph{simulate} the probes of the edges without actually probing any edge.
We shall set $N$ later. Let $S^{1},S^{2},...,S^{N}$ be $0$-$1$
indicator random variables of whether $a$ was safe or not in each
simulation. Note that $\ex{S^{i}}=\beta_{ab}x_{ab}\in\brq{h\br{p_{ab}}x_{ab},x_{ab}}$.

Suppose that $x_{ab}\geq\frac{\epsilon}{n}$, where $n$ is the number
of buyers. The expression $\hat{s}_{ab}=\frac{1}{N}\sum_{i=1}^{N}S^{i}$
should be a good estimation of $\beta_{ab}\cdot x_{ab}$, i.e., \[
\hat{s}_{ab}\in\brq{\beta_{ab}x_{ab}\br{1-\epsilon},\beta_{ab}x_{ab}\br{1+\epsilon}}\]
with probability $1-\frac{1}{n^{C}}$. \global\long\def\Z{Z}
 Set $N=\frac{6n}{\epsilon^{3}}\ln\br{2n^{2}\Z}$ for $Z = \frac{3}{\eps}+1$.

Applying Chernoff's bound $\pr{|X-\ex X|>\eps\ex X}\leq2e^{-\frac{\epsilon^{2}}{3}\ex X}$
with $X=\sum_{i=1}^{N}S_{i}$ one obtains: 
\begin{align*}
 & \pr{\sum_{i=1}^{N}S_{i}\notin\brq{\br{1-\epsilon}\beta_{ab}x_{ab}\cdot N,\br{1+\epsilon}\beta_{ab}x_{ab}\cdot N}}\\
\leq & 2\exp\br{-\frac{\epsilon^{2}}{3}\beta_{ab}x_{ab}\cdot N}\leq2\exp\br{-\frac{\epsilon^{2}}{3}\frac{x_{ab}}{2}\cdot N}\leq2\exp\br{-\frac{\epsilon^{3}}{6n}\cdot N}=\frac{1}{n^{2}}\frac{1}{\Z}.
\end{align*}

From the union bound, with probability at least $1-\frac{1}{\Z}$
we have that $\h s_{ab}\in\brq{\beta_{ab}x_{ab}\br{1-\epsilon},\beta_{ab}x_{ab}\br{1+\epsilon}}$
for every edge $ab$ such that $x_{ab}\geq\frac{\epsilon}{n}$.
Now let us assume that this happened, i.e., that we have good estimates. We
set $\alpha_{ab}=\max\{\frac{1}{2},\min\{\frac{1}{2}\frac{x_{ab}}{\h s_{ab}},1\}\}$
which belongs to $\brq{\frac{1}{2}\frac{1}{\beta_{ab}\br{1+\epsilon}},\frac{1}{2}\frac{1}{\beta_{ab}\br{1-\epsilon}}}$,
but only for edges $ab$ such that $x_{ab}\geq\frac{\epsilon}{n}$.
For edges $ab$ such that $x_{ab}<\frac{\epsilon}{n}$ we just put
$\alpha_{ab}=1$ (so we do not dump such edges actually). Two elements
of the proof were depending on the dumping factors. First, now the
probability that edge $ab$ is taken is $\alpha_{ab}\beta_{ab}x_{ab}\in\brq{\frac{x_{ab}}{2\br{1+\epsilon}},\frac{x_{ab}}{2\br{1-\epsilon}}}$.
Second, recall that the probability of edge $ab$ not to be blocked is:
\begin{equation}
\frac{1}{1+\br{1-\frac{1}{e}}\br{\sum_{b'\in B\setminus b}p_{ab'}\cdot\alpha_{ab'}\beta_{ab'}\cdot x_{ab'}}}.\label{eq:itemblockedwithdumping}
\end{equation}
We have that 
\begin{align*}
 & \sum_{b'\in B\setminus b}p_{ab'}\cdot\alpha_{ab'}\beta_{ab'}\cdot x_{ab'}\\
= & \sum_{b'\in B\setminus b:x_{ab'}\geq\frac{\epsilon}{n}}p_{ab'}\cdot\alpha_{ab'}\beta_{ab'}\cdot x_{ab'}+\sum_{b'\in B\setminus b:x_{ab'}<\frac{\epsilon}{n}}p_{ab'}\cdot\alpha_{ab'}\beta_{ab'}\cdot x_{ab'}\\
\leq & \sum_{b'\in B\setminus b:x_{ab'}\geq\frac{\epsilon}{n}}p_{ab'}\cdot\frac{1}{2\br{1-\epsilon}}x_{ab'}+\sum_{b'\in B\setminus b:x_{ab'}<\frac{\epsilon}{n}}x_{ab'}\\
\leq & \frac{1}{2\br{1-\epsilon}}+\epsilon=\frac{1}{2}+O\br{\epsilon}.
\end{align*}
So the probability that $a$ is not blocked is at least $\frac{1}{1+\br{1-\frac{1}{e}}\br{\frac{1}{2}+O\br{\epsilon}}}.$
The final probability that edge $ab$ is probed is at least 
\begin{align*}
\br{1-\frac{1}{e}}\frac{x_{ab}}{2\br{1+\epsilon}}\cdot\frac{1}{1+\br{1-\frac{1}{e}}\br{\frac{1}{2}+O\br{\epsilon}}} & =\frac{x_{ab}}{1+\eps}\cdot\frac{e-1}{2e+\br{e-1}\br{1+O\br{\epsilon}}}\\
 & =x_{ab}\cdot\frac{e-1}{3e-1+O\br{\epsilon}}>0.24 \cdot x_{ab}.
\end{align*}
In the last inequality above we assumed $\eps$ to be small enough.

With probability at most $\frac{1}{\Z}$ we did not obtain good estimates
of the dumping factors. Still we have that $\alpha_{ab}\in\brq{\frac{1}{2},1}$,
and therefore $\alpha_{ab}\beta_{ab}\in\brq{\frac{1}{4},1}$. In this
case quantity~\eqref{eq:itemblockedwithdumping} can be just lower-bounded
by $\frac{1}{1+\br{1-\frac{1}{e}}}$, and the probability that edge
$ab$ is probed in the subroutine for buyers is at least $\frac{x_{ab}}{4}$.
Thus the probability that edge $ab$ is probed during the
algorithm is at least $\br{1-\frac{1}{e}}\frac{x_{ab}}{4}\cdot\frac{1}{1+\br{1-\frac{1}{e}}}=\frac{x_{ab}}{4}\cdot\frac{e-1}{2e-1}> 0.097 x_{ab}.$
The total expected outcome of the algorithm is therefore, for sufficiently
small $\eps$, at least 
\[
LP_{onl}\br{\br{1-\frac{1}{\Z}}\frac{e-1}{3e-1+O\br{\epsilon}}+\frac{1}{\Z}\frac{1}{4}\cdot\frac{e-1}{2e-1}}\overset{Z = \frac{3}{\eps}+1}{\geq} 0.24 \cdot LP_{onl}.
\]


The above approach can be combined with the small/big probability
trick from Section \ref{sec:onlineBigSmall}. 
By choosing $\eps$ small enough the approximation ratio is $0.245$
as claimed.

\end{document}